\newcommand{\tilth}{\tilde{\theta}}
\DeclareMathOperator*{\esssup}{ess\,sup}
\newtheorem{theorem}{Theorem}
\newtheorem{prop}{Proposition}
\newtheorem{lemma}{Lemma}
\newtheorem{corollary}{Corollary}
\newtheorem{remark}{Remark}
\newtheorem{assumption}{Assumption}
\newcommand{\bN}{\mathbb{N}}
\newcommand{\bA}{\mathbb{A}}
\newcommand{\cA}{\mathcal{A}}
\newcommand{\cC}{\mathcal{C}}
\newcommand{\cJ}{\mathcal{J}}
\newcommand{\cL}{\mathcal{L}}
\newcommand{\cF}{\mathcal{F}}
\newcommand{\cG}{\mathcal{G}}
\newcommand{\cN}{\mathcal{N}}
\newcommand{\Exp}{{\sf E}}
\newcommand{\Pro}{{\sf P}}
\begin{document}

\title{Quickest Change Detection with Controlled Sensing
}

\author{Venugopal V. Veeravalli, ~\IEEEmembership{Fellow, ~IEEE},
Georgios Fellouris, ~\IEEEmembership{Member,~IEEE}, and  George V. Moustakides, ~\IEEEmembership{Life Senior Member,~IEEE}

\thanks{G. Fellouris is with Department of Statistics, and V.V. Veeravalli is with the ECE Department,
of the University of Illinois at Urbana-Champaign, USA. Email: \{fellouri, vvv\}@illinois.edu. G.V. Moustakides is with the ECE Department, University of Patras, Greece, Email: \ moustaki@upatras.gr.
}

\thanks{This research was supported by the US Army Research Laboratory under Cooperative Agreement W911NF-17-2-0196, and by the US National Science Foundation under grants ATD-1737962 and ECCS-2033900, through the University of Illinois at Urbana-Champaign.}

\thanks{A preliminary version of this work was presented at the 2022 IEEE International Symposium on Information Theory~(ISIT)~\cite{isit_2022_pub}}
}
\IEEEoverridecommandlockouts
\maketitle

\vspace*{-0.5in}

\begin{abstract}
In the problem of quickest change detection, a change occurs at some unknown time in the distribution of a sequence of random vectors  that are monitored in real time, and the goal is to detect this change as quickly as possible subject to a certain false alarm constraint.  In this work we consider this problem  in the presence of parametric uncertainty in the post-change regime and controlled sensing. That is, the post-change distribution contains an unknown parameter, and the distribution of each observation, before and after the change, is affected by a control action. In this context, in addition to a stopping rule that determines the time at which it is  declared that the change has  occurred, one also needs to determine a sequential control policy, which chooses the control action at each time based on the already collected observations. 
We formulate this problem mathematically using Lorden's minimax criterion, and  assuming that there are finitely many possible actions and post-change parameter values. We then propose a specific procedure for this problem that employs an adaptive CuSum statistic in which (i) the estimate of the parameter is based on a fixed number of the more recent observations, and (ii) each action is selected to maximize the Kullback-Leibler divergence of the next observation  based on the current parameter estimate, apart from a small number of exploration times. We show that this procedure, which we call the Windowed Chernoff-CuSum (WCC), is  first-order asymptotically optimal under Lorden's minimax criterion, for every possible possible value of the unknown post-change parameter, as the mean time to false alarm goes to infinity. We also provide simulation results to illustrate the performance of the WCC procedure.


\end{abstract}
\vspace*{-0.5in}
\begin{IEEEkeywords}
Sequential change detection; Experimental design; Observation control; CuSum test.
\end{IEEEkeywords}

\section{Introduction} \label{sec:intro}

The problem of detecting changes or anomalies in stochastic systems and time series, often referred to as the quickest change detection (QCD) problem, arises in various branches of science and engineering. The observations of the system are assumed to undergo a change in distribution at the change-point, and the goal is detect this change as soon as possible, subject to false alarm constraints. See \cite{poor-hadj-qcd-book-2009,tart-niki-bass-2014,veer-bane-e-ref-2013,
xie-zou-xie-vvv-jsait-qcd-survey} for books and survey articles on the topic. 

In this paper we study an interesting variant of the QCD problem described in \cite{zhang-mei2020banditQCD_journal,banditQCD}, which the authors call the ``bandit" QCD problem.  In this setting, the distribution of the observations is affected not only by the change but also through a control (action) variable that is chosen by the observer. As described in \cite{banditQCD}, a canonical example application for such a setting is in surveillance systems in which sensors can be switched and steered (controlled) to look for targets in different locations in space, and only a subset of locations can be probed at any given time. The policy for controlling the sensors has to be designed jointly with the change detection algorithm to provide the best tradeoff between detection delay and false alarm rate.   A number of other applications contexts are described in \cite{banditQCD}. 

On a fundamental level, the QCD problem in which the distribution of the observations is affected by control actions falls squarely within the larger context of sequential decision-making problems with observation control (or \emph{controlled sensing} \cite{krishnamurthy2016pomdp}). Such controlled sensing problems have a rich history going back to the seminal work of Chernoff \cite{cher-amstat-1959} on the sequential design of experiments, in which a sequential composite binary hypothesis testing problem with observation control is studied. Other works on sequential hypothesis testing with observation control include \cite{Bessler1960_I, Bessler1960_II,albert1961,Kiefer_Sacks_1963,Lalley_Lorden_1986,Keener_1984} and  more recently \cite{niti-atia-veer-ieeetac-2013,naghshvar2013active,niti-veer-sqa-2015,deshmukh2021sequential}. There has also been considerable progress on the special ``multi-channel" case of sequential hypothesis testing with observation control, which is also commonly referred to as sequential anomaly detection.  In this context, there are multiple data streams,  some of which are anomalous, and the goal is to accurately pick out the anomalous ones among them, while observing only a subset of the streams at each time-step \cite{Cohen2015active,Cohen2019nonlinearcost,Cohen2020composite,oddball_2018,Tsopela_2019,Tsopela_2020,tsopelakos2021sequential}. The QCD problem in the multi-channel setting with observation control has been studied  in \cite{Draga96}, and more 
recently in \cite{zhang-mei2020banditQCD_journal,Mei_Moust_2021,Anamitra2021,XuMeiPostUncert_SeqAn}. In this context,  an unknown subset of the streams undergo a change in their  distributions at the unknown change-point, and only a subset of them  can  be observed at each time-step. 

Our work is  inspired by \cite{banditQCD}, in which a general setting for QCD with controlled sensing is considered. In particular, it is assumed that the pre- and post-change distributions can be affected by the choice of a control action, which takes values in a finite set. Furthermore, the post-change is allowed to have parametric uncertainty within a finite parameter set.

The formulation of the optimization problem to obtain the best tradeoff between detection delay and false alarm rate in \cite{banditQCD}  is different from the standard formulations of the QCD problem \cite{veer-bane-e-ref-2013}. In particular, the false alarm constraint used in  \cite{banditQCD} is one where the probability of stopping (and declaring a change) before a \emph{fixed time} $m$ under the pre-change regime is constrained to some level $\alpha$. Under such a constraint, if the change  happens at any time after $m+1$, then the delay can be made to be equal 0 by simply stopping at time $m+1$, if the test did not stop before, while maintaining the false alarm constraint. The authors of \cite{banditQCD} refer to \cite{lai1998} to justify their false alarm constraint, but the false alarm constraint proposed in \cite{lai1998} requires the probability of stopping in \textit{any} time-window of size $m$ in the pre-change regime be small (see \cite[Section II.B]{lai1998}), not just the window from $0$ to $m$. Furthermore, it is more common in the QCD literature to constrain the \emph{mean time to false alarm} 
\cite{veer-bane-e-ref-2013}, which is also a constraint used in \cite[Section I]{lai1998}.



The measure of delay used in \cite{banditQCD} is the expected delay conditioned on a fixed change-point. In QCD problems where no prior is assumed on the change-point, it is common to measure delay by  taking the supremum of the expected delay over all possible change-points (see, e.g., the Lorden and Pollak formulations \cite{veer-bane-e-ref-2013}). Furthermore, the
asymptotic upper bound on the delay for a fixed change-point) given in \cite[Theorem 3]{banditQCD} is larger than the corresponding lower bound given in \cite[Theorem 2]{banditQCD} (which assumes that the change happens at time 1) by a factor greater than 160. In contrast, first-order asymptotic optimality results in the QCD literature require the ratio of the upper and lower bounds on the delay metric to converge to 1, as the false alarm rate goes to 0 \cite{veer-bane-e-ref-2013}.

The  $\epsilon$-GCD procedure proposed in \cite{banditQCD} for change detection with observation control uses at each time instant  a maximum likelihood estimate (MLE) of the post-change parameter to determine the best action at each time step, except that with a fixed probability  $\epsilon$, the action is chosen uniformly at random. The MLE  at each time instant is determined only by those samples that resulted from random exploration. The use of the current maximum likelihood estimate for determining the current action is in fact the key feature in Chernoff's proposed control policy in  \cite{cher-amstat-1959} for the problem of sequential composite binary hypothesis testing problem with observation control. The need for random exploration of actions in that context arises because the true post-change parameter may not be distinguishable from other possible post-change parameter values under certain actions. However, exploring at random with probability  $\epsilon>0$ \textit{at each time instant} will generally lead to substantial performance loss (by roughly a factor of $1/(1-\epsilon)$) relative to that of an oracle that knows the post-change parameter.   In contrast, as we show in this paper,  we can achieve the  performance of the latter to a first-order asymptotic approximation as the false alarm rate goes to 0, by performing random exploration at only $o(n)$ time-steps over a time-horizon of length $n$.

Another consideration with the use of the MLE that relies on data from the beginning  to estimate the post-change parameter in the QCD setting (in contrast to the sequential hypothesis testing setting) is that it could  potentially be biased away from the true post-change parameter, due to the pre-change observations, if  the change-point is not small.
This means that it will take longer for the MLE to converge to the true post-change parameter in the post-change regime, thereby increasing the delay, as the change-point gets larger. We get around this problem by forcing the MLE to use only a window of past observations, and by using a windowed CuSum test (as is done in \cite{xie-mous-xie-IT-2023} for the problem of quickest change detection with post-change parametric uncertainty, without controlled sensing).

%


To sum up, our goal in this paper to precisely formulate the QCD problem with controlled sensing, as well as to propose and analyze a novel,
asymptotically optimal algorithm for this problem. As described in  Section~\ref{sec:prob_form}, we use Lorden's metric \cite{lorden1971} for the delay, and we pose the optimization problem as the minimization of this delay metric, under a constraint on the mean time to false alarm (MTFA). In Section~\ref{sec:ULB}, we derive a universal lower bound on the delay of any procedure, under the MTFA constraint. In Section~\ref{sec:Chernoff_CuSum}, motivated by the discussion in the previous paragraphs, we develop a procedure, which we call the Windowed Chernoff-CuSum (WCC) procedure. We establish that the WCC procedure is asymptotically optimal under Lorden's criterion as the MTFA goes to infinity.  In Section~\ref{sec:sim}, we provide some simulation results that illustrate the main points of the theoretical analysis.

\section{Problem formulation}\label{sec:prob_form}

Let $\{X_n: n \in \bN\}$ be a sequence of random vectors  whose values are  observed  sequentially,  let $\{U_n: n \in \bN \}$ be a sequence of random variables to be used for randomization purposes, and let  $\{\cF_n,: n \in \bN\}$ the filtration generated by these two sequences, i.e.,
\begin{equation} \label{eq:filtration}
\cF_n := \sigma( X_1, \ldots, X_n, U_1, \ldots, U_n), \quad n \in \bN.
\end{equation}
We also denote by $\cF_0$ the trivial $\sigma$-algebra and for any $m,n \in \bN$ with $m \leq n$ we set:
\begin{equation} \label{eq:filtration2}
\cF_{m:n} := \sigma( X_{m}, \ldots, X_n, U_m, \ldots, U_n).
\end{equation}
\begin{assumption} \label{ass:basic}
We assume that, for any $n \in \bN$,  $U_n$ is independent of $\cF_{n-1}$ and   uniformly distributed in $[0,1]$,  and that  $X_n$  is independent of $U_n$ and conditionally independent of $\cF_{n-1}$ given the value of a control/action  $A_{n}$. We assume that the  action  $A_{n}$ takes values in a  finite  set  $\bA$, with at least two elements. Furthermore,  for $n > 1$, $A_n$ is a $\cF_{n-1}$-measurable random variable, with $A_1$ being uniformly distributed on the set $\bA$.
\end{assumption}

We refer to the sequence of actions $A := \{A_n: n \in \bN \}$ as  a \emph{control policy} and we denote by $\cA$ the \emph{family} of all control policies, i.e.,   $A= \{A_n: n \in \bN \} \in \cA$.

Let $\{f^{\theta}_a: \theta \in \Theta, a \in \bA\}$ be a set of densities with respect to a dominating measure,  $\lambda$, where  $\Theta$ is an arbitrary finite set. Furthermore, for some $\theta_0 \notin \Theta$, let $\{f^{\theta_0}_a: a \in \bA\}$ be a set of densities with respect to $\lambda$. 

Let the \textit{change-point} be denoted by $\nu \in \bN$, which we assume is completely unknown and deterministic. Given that  $A_{n}=a$, where $a \in \bA$,  we assume that for $n < \nu $ (pre-change), $X_n$ has conditional density $f^{\theta_0}_{a}$, and that for  $n \geq \nu$ (post-change), $X_n$ has conditional density $f^{\theta}_{a}$, with $\theta \in \Theta$.

To be more specific,  we denote by $\Pro_{\nu,A}^{\theta}$ 
the underlying probability measure, and by $\Exp_{\nu,A}^{\theta}$ the corresponding expectation, when the change-point is $\nu$,  the post-change parameter  $\theta \in \Theta$, and  the control policy $A$ is used, which means that for any 
$n \in \bN$ and any Borel set $B$  we have
$$
\Pro_{\nu,A}^\theta(X_n \in B \,|\, \cF_{n-1})= 
\begin{cases}
\int_B f^{\theta_0}_{A_n} \; d\lambda, \quad \text{if} \quad n < \nu\\
\int_B f^{\theta}_{A_n} \; d\lambda, \quad \text{if} \quad n \geq \nu.
\end{cases}
$$

Moreover, we denote  by
$\Pro_{\infty,A}$ the underlying probability measure, and by $\Exp_{\infty,A}$  the corresponding expectation, when the change never occurs and the control policy $A$ is used, which means that for any $n \in \bN$ and any Borel set $B$  we have
$$\Pro_{\infty,A}(X_n \in B \,|\, \cF_{n-1}) =\int_{B} f_{A_{n}}^{\theta_0} \; d\lambda. 
$$


 A \emph{procedure} for quickest change detection with controlled sensing consists of a pair $(A,T)$, where $A$ is a control policy, i.e.,  $A \in \cA$, and 
$T$ is an $\{\cF_n\}$-stopping time, i.e., $\{T=n\} \in \cF_n$ for every $n \in \bN$. We denote by $\cC$ the family of all  procedures, i.e., $(A,T)\in \cC$.  \\


\noindent \textbf{False Alarm Measure.} We measure the false alarm performance of a procedure in terms of its mean time to false alarm, and we denote by $\cC_\gamma$  the subfamily of procedures for which the mean time to false alarm is at least $\gamma$, i.e., 
\[
\cC_\gamma= \left\{ (A,T) \in \cC: \; \Exp_{\infty,A}[T] \geq \gamma \right\}.
\]

\noindent \textbf{Delay Measure.} We use a worst-case measure for delay, that is  the commonly used Lorden's measure \cite{lorden1971}. Specifically, for any $\theta \in \Theta$ and $(A,T) \in \cC$ we set 
\begin{align} 
 \cJ_\theta(A,T) & :=  \; \sup_{\nu \geq 1} \; \text{ess sup} \; \Exp_{\nu,A}^{\theta} \left[(T-\nu+1)^+  |  \; \cF_{\nu-1} \right]  
 \label{eq:Lorden_meas}
\end{align}

\noindent \textbf{Optimization Problem.} The optimization problem we consider  is to find a test that can be designed to belong to $\cC_\gamma$ for every $\gamma>1$\textcolor{green}{,} and achieves 
\begin{equation}\label{eq:opt_prob}
\inf_{ (A,T) \in \cC_\gamma}  \cJ_\theta(A,T) 
\end{equation}
 to a first-order asymptotic approximation as 
$ \gamma \to \infty$ simultaneously for every post-change parameter $\theta \in \Theta$.

\noindent We make the following further assumptions in our analysis:

\begin{assumption}\label{ass:a2}
For every $\theta \in \Theta$ there exists an $a \in \bA$ such that the Kullback-Leibler (KL) divergence between the densities $f_a^\theta$ and $f_a^{\theta_0}$ is positive, i.e.,
\begin{equation} \label{eq:Iathetadef}
I_a^{\theta} := D(f_a^\theta\| f_a^{\theta_0}) =  \int \log (f_a^\theta/ f_a^{\theta_0}) \; f_a^\theta \; d \lambda >0.
\end{equation}
This means that the post-change distribution is distinguishable from the pre-change distribution for at least one choice of control. 
\end{assumption}
If Assumption~\ref{ass:a2} does not hold for some $\theta \in \Theta$, then the change will not be detected efficiently if the true post-change parameter is $\theta$. This assumption implies that 
\begin{equation}
I^{\theta} :=  \max_{a \in \bA}   I_a^{\theta}  >0, ~ \forall \, \theta \in \Theta. 
\end{equation}
\begin{assumption}\label{ass:a1}
For every $\theta \in \Theta$ and 
$a \in \bA$ we have 
\begin{equation}
\int \log (f_a^\theta/ f_a^{\theta_0})^2 \; f_a^\theta \; d \lambda < V <\infty.   
\end{equation}
\end{assumption}

Assumption~\ref{ass:a1} is a technical condition that is needed in our analysis (see, e.g.,  Theorem~\ref{th:LB} and Theorem \ref{th:aub}).

\begin{assumption}\label{ass:a3}
For every $\theta, \tilth \in \Theta$, such that $\theta \neq \tilth$, there exists an $a \in \bA$ so that:
\begin{equation} \label{eq:cons_assumption}
D(f_a^\theta\| f_a^{\tilth}) = \int \log (f_a^\theta/ f_a^{\tilth}) \; f_a^\theta \; d \lambda >0,
\end{equation}
i.e., the post-change distribution for two distinct values of the post-change parameter are distinguishable by at least one control. 
\end{assumption}
Assumption~\ref{ass:a3} is crucial for the consistent estimation of $\theta$ in the post-change regime (see Lemma~\ref{lem:cons}).  

\begin{remark}\label{rem:a3} One could strengthen Assumption~\ref{ass:a3} by assuming that 
\begin{equation*} 
D(f_a^\theta\| f_a^{\tilth})  >0,
\end{equation*}
for all $\theta, \tilth \in \Theta$, such that $\theta \neq \tilth$, and for all $a\in \bA$. However, this stronger assumption fails to hold in many applications of interest. 

For example, consider a multichannel setting with $K$ streams (see, also, Section~\ref{sec:sim}) in which a single (unknown) stream of observations undergoes a change in distribution at the change-point, and only one of the streams is observed at each time step. Here $\Theta= \bA = \{1,2, \ldots, K\}$. If $\theta = j$, then the $j$-th stream undergoes a change in distribution, and if $a=i$, the $i$-th channel is observed. In particular, if $a=1$, then for $\theta=2$ and $\tilth=3$, the observed channel (channel 1) has the pre-change distribution, which means that:
\begin{equation*} 
D(f_1^2\| f_1^3) =0.
\end{equation*}
Thus, this strengthened version of 
Assumption~\ref{ass:a3} fails to hold in this case.
\end{remark}

For any $\theta, \tilde{\theta} \in \Theta$, such that  $\tilth \neq \theta$,
and $a \in \bA$, we further define the Bhattacharya coefficient (see, e.g. \cite{moulin-veeravalli-2018}):
\begin{equation} \label{bhat}
    \rho (\theta, \tilth, a) := \int \sqrt{f_a^\theta\;  f_a^{\tilth}} \: d\lambda.
\end{equation}
By Assumptions~\ref{ass:a2} and \ref{ass:a3}, it follows   that
for any $\theta, \tilde{\theta} \in \Theta$, such that  $\tilth \neq \theta$, there exists an $a\in \bA$ such that $\rho (\theta, \tilth, a) < 1$, and as a result
\begin{equation} \label{bhat_average}
\rho (\theta, \tilth) := \frac{1}{|\bA|} \sum_{a\in \bA} \rho (\theta, \tilth, a) < 1.
\end{equation}
We also denote by $\rho$ the maximum of these quantities:
\begin{align}\label{max_Bhat}
\rho &:= \max_{\tilth \neq \theta} \rho (\theta, \tilth).
\end{align}
The quantity $\rho$, which is less than 1, will play a key role in the consistent estimation of $\theta$ in the post-change regime (see Lemma~\ref{lem:cons}), as well as in lower bounding the post-change expected values of the increments of the proposed detection statistic (see    Lemma~\ref{lemma:LB on Exp}). 

\smallskip

\noindent{\textbf{Notation:}} The notation $f(x) \sim g(x)$ as $x\to x_0$ means that $\lim_{x \to x_0} \frac{f(x)}{g(x)} = 1$. The minimum of two numbers $x$ and $y$ is denoted as $x\wedge y$. For $k > \ell$, the set of variables $\{x_\ell, x_{\ell+1}, \ldots, x_k\}$ is sometimes denoted compactly as $x_{[\ell,k]}$.
\section{Universal lower bound} \label{sec:ULB}
For $\theta \in \Theta$, $A \in \cA$, we define the log-likelihood ratio of the observation at time $m \in \bN$ as
\begin{align} \label{Lambda}
\Lambda^\theta_{m, A}  :=  \log \left( \frac{f^\theta_{A_{m}}(X_m)}{f^{\theta_0}_{A_{m}}(X_m)} \right),
\end{align}
 and we observe that for any  $\nu, t \in \bN$, we have 
\begin{align*}
 \frac{d\Pro^\theta_{\nu,A}}{d\Pro_{\infty,A}} (\cF_{\nu+t})= \exp \left\{  \sum_{m=\nu}^{\nu+t}  \Lambda^{\theta}_{m,A} \right\}.
 \end{align*}
In the following result, we provide an asymptotic lower bound on the value of the optimization problem in \eqref{eq:opt_prob} as the MTFA goes to infinity, i.e., as $\gamma \to \infty$. 
\begin{theorem}\label{th:LB}
For any $\theta \in \Theta$, as $\gamma \to \infty$ we have
$$\inf_{ (A,T) \in \cC_\gamma}  \cJ_\theta(A,T) \geq \frac{\log \gamma}{ I^\theta} (1+o(1)).$$
\end{theorem}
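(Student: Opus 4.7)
My plan is to combine a change-of-measure argument with a careful averaging identity that exploits the MTFA constraint, adapting the classical Lai--Lorden lower bound to the controlled sensing setting. Fix $\theta \in \Theta$ and any $(A,T) \in \cC_\gamma$, and set $\delta := \cJ_\theta(A,T)$; it suffices to show $\delta \geq (\log \gamma)/I^\theta \cdot (1 + o(1))$ as $\gamma \to \infty$, whence the theorem follows upon taking the infimum over $\cC_\gamma$.

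The first step is that, by the definition of $\delta$ and Markov's inequality, for every $K > 0$ and on the $\cF_{\nu-1}$-measurable event $\{T \geq \nu\}$ one has $\Pro^\theta_{\nu,A}\{\nu \leq T < \nu + K \,|\, \cF_{\nu-1}\} \geq 1 - \delta/K$ (up to a harmless off-by-one in $K$). Using the likelihood-ratio identity $d\Pro^\theta_{\nu,A}/d\Pro_{\infty,A}\big|_{\cF_n} = \exp(\sum_{m=\nu}^n \Lambda^\theta_{m,A})$ for $n \geq \nu$ together with truncation at a level $C > 0$, I then get
$$
\Pro^\theta_{\nu,A}\{\nu \leq T < \nu + K \,|\, \cF_{\nu-1}\} \;\leq\; e^{C}\, \Pro_{\infty,A}\{\nu \leq T < \nu + K \,|\, \cF_{\nu-1}\} + \Pro^\theta_{\nu,A}\{M^\theta_{\nu,K} \geq C \,|\, \cF_{\nu-1}\},
$$
where $M^\theta_{\nu,K} := \max_{0 \leq j < K}\sum_{m=\nu}^{\nu+j}\Lambda^\theta_{m,A}$. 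To control the last term, I would decompose $\Lambda^\theta_{m,A} = I^\theta_{A_m} + (\Lambda^\theta_{m,A} - I^\theta_{A_m})$: the centered part is a martingale-difference sequence under $\Pro^\theta_{\nu,A}$ with conditional variance bounded by $V$ (Assumption~\ref{ass:a1}), while the drift satisfies $I^\theta_{A_m} \leq I^\theta$ (Assumption~\ref{ass:a2}). Kolmogorov's maximal inequality then yields, for every $\eta > 0$, $\Pro^\theta_{\nu,A}\{M^\theta_{\nu,K} \geq (1+\eta) K I^\theta \,|\, \cF_{\nu-1}\} \leq V/(K\eta^2(I^\theta)^2) =: r(K)$, uniformly in $\cF_{\nu-1}$.

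Setting $C = (1+\eta) K I^\theta$ and combining the displays on $\{T \geq \nu\}$ gives $\Pro_{\infty,A}\{\nu \leq T < \nu + K \,|\, \cF_{\nu-1}\} \geq e^{-C}(1 - \delta/K - r(K))$. Integrating against $\ind{T \geq \nu}$ and summing over $\nu \geq 1$, and using both the pointwise bound $\sum_\nu \ind{\nu \leq T < \nu + K} \leq K$ on the left and $\sum_\nu \Pro_{\infty,A}\{T \geq \nu\} = \Exp_{\infty,A}[T] \geq \gamma$ on the right, I arrive at $K \geq e^{-(1+\eta) K I^\theta}(1 - \delta/K - r(K))\gamma$. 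Setting $K = \delta/\alpha$ for any $\alpha \in (0,1)$ and noting that this inequality forces $\delta \to \infty$ as $\gamma \to \infty$ (so $r(\delta/\alpha) \to 0$ and $\log(\delta/\alpha) = o(\log \gamma)$), rearrangement yields $\delta \geq (\alpha/(1+\eta)) \cdot (\log \gamma)/I^\theta \cdot (1 - o(1))$. Letting $\alpha \uparrow 1$ and $\eta \downarrow 0$ completes the proof. The main obstacle is ensuring that the maximal-inequality bound on $M^\theta_{\nu,K}$ is uniform in $\cF_{\nu-1}$ despite the path-dependence of the actions $A_m$; this is handled cleanly by the centered-martingale decomposition above, whose drift and variance bounds do not depend on the realized path.
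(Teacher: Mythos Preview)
Your proof is correct and is essentially a self-contained unpacking of the paper's argument. The paper invokes \cite[Theorem~1]{lai1998} as a black box and only verifies its hypothesis, namely that
\[
\sup_{\nu}\esssup\Pro^\theta_{\nu,A}\!\left(\max_{1\leq t\leq n}\sum_{m=\nu}^{\nu+t}\Lambda^\theta_{m,A}>(1+\delta)I^\theta n\;\Big|\;\cF_{\nu-1}\right)\to 0,
\]
which it does via exactly the same centered-martingale decomposition $\Lambda^\theta_{m,A}=I^\theta_{A_m}+(\Lambda^\theta_{m,A}-I^\theta_{A_m})$ together with Doob's inequality and Assumption~\ref{ass:a1}. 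Your proof carries out by hand the change-of-measure plus averaging-over-$\nu$ machinery that Lai's theorem encapsulates (the truncation at level $C$, summing $\sum_\nu\ind{\nu\leq T<\nu+K}\leq K$ against $\sum_\nu\Pro_{\infty,A}\{T\geq\nu\}\geq\gamma$), and then plugs in the identical maximal-inequality bound at the identical place. So there is no genuinely different route here: what you gain is self-containment (no external citation needed), at the cost of a longer write-up; the paper's version is shorter but relies on the reader knowing or looking up Lai's result.
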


\begin{proof}
Fix $(A,T) \in \cC_\gamma$ and $\theta \in \Theta$. By \cite[Theorem  1]{lai1998} it suffices to show that, for every $\delta>0$, the sequence 
\[
\sup_{\nu \in \bN} \esssup \Pro^\theta_{\nu,A} 
\left( \max_{1 \leq t \leq n} \sum_{m=\nu}^{\nu+t} \Lambda^{\theta}_{m,A} > I^\theta (1+\delta) n \; |\;  \cF_{\nu-1}  \right) 
\]
converges to 0 as $n \to \infty$.  Indeed, for every  $n, \nu \in \bN$ we have 
\begin{align*}
&\Pro^\theta_{\nu,A}  \left( \max_{1 \leq t \leq n} \sum_{m=\nu}^{\nu+t} \Lambda^{\theta}_{m,A}  > I^\theta (1+\delta) n \; |\;  \cF_{\nu-1}  \right) \\
&\leq \Pro^\theta_{\nu,A}  \left( \max_{1 \leq t \leq n} \sum_{m=\nu}^{\nu+t} (\Lambda^{\theta}_{m,A} - I^{\theta}_{A_{m}})  > I^\theta \delta  n \; |\;  \cF_{\nu-1}  \right) \\
&\leq  \frac{n V}{(I^\theta \delta  n)^2}=  \frac{ V}{(I^\theta \delta)^2 n}.
\end{align*}

Here,  the first inequality follows from the fact that $I_a^\theta \leq I^\theta$ for every $a \in \bA$, and the second one from a conditional version of  Doob's submartingale inequality. We can apply the latter because 
$$\left\{Y^\theta_{\nu:t,A} := \sum_{m=\nu}^{\nu+t} (\Lambda^{\theta}_{m,A} - I^{\theta}_{A_{m}}), \cF_{\nu+t}, t \in \bN \right\}$$
is a $\Pro^\theta_{\nu,A}$-martingale and by assumption \textbf{A1} it follows that 
$$\text{Var}^{\theta}_{\nu,A}[Y^\theta_{\nu:t, A} \, | \, \cF_{\nu-1}] \leq V \, t \quad \text{for all} \quad t \in \bN.$$ 
\end{proof}
In the next section  we propose a specific procedure for quickest change detection with controlled sensing, and establish that it achieves the lower bound in Theorem~\ref{th:LB} asymptotically as $\gamma \to \infty$.

\section{The Windowed Chernoff-CuSum Procedure} \label{sec:Chernoff_CuSum}
%

Consider any control policy $A \in \cA$, and window of size $w \in \bN$. At time $n > w$, the maximum likelihood estimator (MLE) of $\theta$ based on observations $\{X_m: m=n-w, \ldots, n-1\}$ is given by: 
\begin{align} \label{eq:ML}
\hat{\theta}_{n} & \in \; \arg \max_{\vartheta \in \Theta} \sum_{m=n-w}^{n-1}  \log f^{\vartheta}_{A_m} (X_m).
\end{align}

To define the proposed policy, we also need to introduce a  sequence of  deterministic times, 
$$\cN \subseteq\{w+1, w+2, \ldots\},$$  
which contains at most $q$ elements 
in any interval of length $w$, i.e., 
\begin{align} \label{sampling}
\sum_{i=w+1}^{w+n}  \mathbb{I}\{i \in \cN\}\leq  \frac{n}{w}  q, \quad \forall \; n \in \bN, 
 \end{align}
for some positive integer $q$ that is smaller than $w$. The sequence of times $\cN$ will be used for random exploration of controls, which will be crucial for consistent estimation of $\theta$ in the post-change regime, under Assumption~\ref{ass:a3}.

Given such a sequence, we propose a control policy $A^*$ as follows.
\begin{itemize}
    \item For $n=1, \ldots, w$, $A_n^*$ is selected uniformly at random on the set $\bA$, using  the randomization variable $U_{n-1}$.
    \item If $n > w$, and $n \in \cN$, then   $A_n^*$ is selected uniformly at random on $\bA$, using  the randomization variable $U_{n-1}$. 
    \item If $n > w$, and $n \notin \cN$, 
    $A_n^*$ is selected to maximize the Kullback-Leibler divergence of the post-change versus the pre-change distribution based on $\hat{\theta}_{n}$, i.e., 
\begin{equation}\label{eq:Anstar}
A_n^* \in \arg \max_{a \in \bA} I_a^{\hat{\theta}_{n}} = 
\arg \max_{a \in \bA} D(f_a^{\hat{\theta}_{n}}\| f_a^{\theta_0})
.
\end{equation}
\end{itemize}
The proposed stopping time is defined using a windowed CuSum test, which was introduced in \cite{xie-mous-xie-IT-2023}. Specifically,  for any $A \in \cA$, and for $n > w$, we define the following CuSum-like statistic recursively:
\begin{equation} \label{eq:CuSum_stat}
W_{n,A}= \max\{ W_{n-1,A}, 0\} + \Lambda^{\hat{\theta}_{n}}_{n,A}, \quad  \quad W_{w,A}=0,
\end{equation}
where the log-likelihood ratio $\Lambda$ is as defined in \eqref{Lambda}. 
The change is declared at the first  time this statistic exceeds a threshold $b$, i.e.,
\begin{equation} \label{eq:Tba}
T_{b,A}:= \inf\left\{n > w:  W_{n,A}  \geq b\right\}.
\end{equation}
where $b$ is to be selected to satisfy the false alarm constraint. 

We refer to the pair $(A^*, T_{b, A^*})$ as the  \emph{ Windowed Chernoff-CuSum} (WCC) procedure to acknowledge that Chernoff \cite{cher-amstat-1959} was the first to suggest a control policy similar to $A^*$ for the problem of sequential hypothesis testing with controlled sensing.  

In the remainder of this section, we  establish  bounds on the performance of the WCC procedure. We begin by establishing a choice of the threshold $b$ that guarantees that the MTFA constraint is met.

\begin{lemma} \label{lem1}
For any  $A \in \cA$ and  $\gamma>1$, by setting $b=\log \gamma$,  we can ensure that
 $(A, T_{b,A}) \in \cC_\gamma$, i.e., that $\Exp_{\infty,A}[T_{b,A}] \geq \gamma.$
\end{lemma}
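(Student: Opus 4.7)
The plan is to bound $T_{b,A}$ from below by the first hitting time of a Shiryaev--Roberts-type statistic that is a well-behaved submartingale under $\Pro_{\infty,A}$, and then to apply optional stopping. The crucial observation is that $\hat{\theta}_n$ defined in \eqref{eq:ML} depends only on $X_{n-w}, \ldots, X_{n-1}$ and $A_{n-w}, \ldots, A_{n-1}$, so it is $\cF_{n-1}$-measurable; together with the fact that $A_n$ is also $\cF_{n-1}$-measurable by Assumption~\ref{ass:basic}, this yields
\[
\Exp_{\infty, A}\!\left[e^{\Lambda^{\hat{\theta}_n}_{n,A}} \,\Big|\, \cF_{n-1}\right] = \int \frac{f^{\hat{\theta}_n}_{A_n}}{f^{\theta_0}_{A_n}}\, f^{\theta_0}_{A_n}\, d\lambda = 1.
\]

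I then define, for $n > w$, the Shiryaev--Roberts-type statistic $R_{n,A} := \sum_{k=w+1}^{n} \exp\!\left( \sum_{m=k}^{n} \Lambda^{\hat{\theta}_m}_{m,A} \right)$, with $R_{w,A}:=0$, which obeys the recursion $R_{n,A} = e^{\Lambda^{\hat{\theta}_n}_{n,A}}(1 + R_{n-1,A})$. Combined with the display above, this gives $\Exp_{\infty,A}[R_{n,A} \mid \cF_{n-1}] = R_{n-1,A} + 1$, so $M_n := R_{n,A} - (n-w)$ is a $\Pro_{\infty,A}$-martingale with respect to $\{\cF_n\}$ starting at $M_w = 0$. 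Moreover, unrolling the recursion \eqref{eq:CuSum_stat} yields $W_{n,A} = \max\{0,\, \max_{w+1 \leq k \leq n} \sum_{m=k}^{n} \Lambda^{\hat{\theta}_m}_{m,A}\}$, so $\exp(W_{n,A}) \leq 1 + R_{n,A}$, which implies $T_{b,A} \geq \tau$, where $\tau := \inf\{n > w: R_{n,A} \geq e^b - 1\}$.

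The proof concludes by applying Doob's optional stopping theorem to $M_n$ at the bounded time $\tau \wedge N$, giving $\Exp_{\infty,A}[R_{\tau \wedge N, A}] = \Exp_{\infty,A}[(\tau \wedge N) - w]$. If $\Pro_{\infty,A}\{\tau = \infty\} > 0$, then $\Exp_{\infty,A}[\tau] = \infty$ and the claim is immediate. Otherwise, monotone convergence gives $\Exp_{\infty,A}[\tau \wedge N] \uparrow \Exp_{\infty,A}[\tau]$, while Fatou's lemma (using $R_{n,A} \geq 0$) yields $\Exp_{\infty,A}[R_{\tau,A}] \leq \Exp_{\infty,A}[\tau] - w$. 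Since $R_{\tau,A} \geq e^b - 1$ on $\{\tau < \infty\}$, it follows that $\Exp_{\infty,A}[\tau] \geq w + e^b - 1 \geq e^b = \gamma$, where the last inequality uses $w \geq 1$. The main subtlety is this optional-stopping passage to the limit: because $R_{n,A}$ is unbounded, some care is needed, but its non-negativity is exactly what makes the Fatou step go through without additional integrability assumptions.
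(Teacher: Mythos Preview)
Your proof is correct and follows essentially the same approach as the paper: dominate $e^{W_{n,A}}$ by a Shiryaev--Roberts-type statistic whose compensator under $\Pro_{\infty,A}$ is $n$ (up to a constant shift), then apply optional stopping. The only differences are cosmetic---you initialize $R_{w,A}=0$ rather than $1$ and introduce the auxiliary stopping time $\tau$, and you are more careful than the paper in justifying the optional-stopping limit via truncation, monotone convergence, and Fatou.
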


\begin{proof}
We can rewrite the recursion for  test statistic in \eqref{eq:CuSum_stat} as:
\begin{equation} \label{eq:CuSum_stat2}
e^{W_{n,A}}= \max\left\{ e^{W_{n-1,A}}, 1\right\} \frac{f^{\hat{\theta}_{n}}_{A_{n}}(X_n)}{f^{\theta_0}_{A_{n}}(X_n)},~n> w,  \quad e^{W_{w,A}}=1.
\end{equation}
We now define a new statistic $\{R_{n,A}\}$ akin to the Shiryaev-Roberts statistic (see, e.g., \cite{veer-bane-e-ref-2013}), which has the recursion:
\begin{equation} \label{eq:SR_stat2}
R_{n,A} := (R_{n-1,A} + 1) \, \frac{f^{\hat{\theta}_{n}}_{A_{n}}(X_n)} {f^{\theta_0}_{A_{n}}(X_n)},~n> w,  \quad R_{w,A}:=1.
\end{equation}
Clearly, $R_{n,A} \geq e^{W_{n,A}}$, for all $n\geq w$.

It is easily seen that $\{R_{n,A} - n: n \geq w\}$ is a martingale sequence under $\Pro_{\infty,A}$, with respect to the filtration $\{\cF_n: n \in \bN\}$. Indeed, since  $A_n$ is $\cF_{n-1}$-measurable by Assumption~\ref{ass:basic}, and $\hat{\theta}_n$ is $\cF_{n-1}$-measurable by construction (see \eqref{eq:ML}), we have for $n>w$:
\[
\Exp_{\infty,A} \left[ R_{n,A} - n \; |\;  \cF_{n-1} \right] = R_{n-1,A} + 1 -n = R_{n-1,A} - (n-1).
\]
Then, by applying the Optional Sampling Theorem \cite{tartakovsky_book_2014}:
\[
\Exp_{\infty,A} \left[ R_{T_{b,A},A} - T_{b,A}\right] = 
R_{w,A} -w = 1-w,
\]
which implies that
\[
\Exp_{\infty,A} \left[T_{b,A}\right] = \Exp_{\infty,A} \left[ R_{T_{b,A},A}\right] + w-1 \geq \Exp_{\infty,A} \left[ e^{W_{T_{b,A},A}}\right] \geq e^b.
\]
Therefore, setting $b=\log \gamma$ ensures that
 $(A, T_{b,A}) \in \cC_\gamma$.
\end{proof}

Note that Lemma~\ref{lem1} holds for any control policy $A\in \cA$ satisfying Assumption~\ref{ass:basic}, and not just the control policy $A^*$.

Our main goal in the remainder of this section is to analyze the delay of   the WCC procedure as $\gamma \to \infty$. Towards this end, we first establish an upper bound on our measure of delay for any control policy $A\in \cA$ (see \eqref{eq:Lorden_meas}):
\[ 
\cJ_\theta(A,T_{b,A}) = 
\sup_{\nu \geq 1} \; \text{ess sup} \; \Exp_{\nu,A}^{\theta} \left[(T-\nu+1)^+  |  \; \cF_{\nu-1} \right].
\]

\begin{lemma} \label{lem:worst-case-bd}
For any change-point $\nu \in \bN$,  control policy $A\in \cA$, and threshold $b>0$
\[ \cJ_\theta(A,T_{b,A})
\leq     \sup_{a_1, \ldots, a_{w}} \Exp_{1,A}^{\theta} \left[T_{b,A}   \, |   \,  A_{1}=a_1, \ldots, A_{w}=a_w  \right].
\]
\end{lemma}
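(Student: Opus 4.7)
The plan is to reduce the Lorden-type worst-case delay at an arbitrary change-point $\nu$ to a worst-case initial-action expectation at $\nu=1$, by combining a restart identity for the windowed-CuSum recursion with a time-shift that exploits the conditional independence in Assumption~\ref{ass:basic}.

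First, I would fix $\nu\geq 1$ and introduce the restarted statistic $\tilde W_{\nu-1}:=0$ and $\tilde W_{n}:=\max\{\tilde W_{n-1},0\}+\Lambda^{\hat\theta_{n}}_{n,A}$ for $n\geq\nu$, together with the hitting time $\tilde T:=\inf\{n\geq\nu:\tilde W_{n}\geq b\}$. Since $\max\{W_{\nu-1,A},0\}\geq 0=\tilde W_{\nu-1}$ and $x\mapsto\max\{x,0\}$ is non-decreasing, a one-step induction gives $W_{n,A}\geq\tilde W_{n}$ for every $n\geq\nu$, hence $T_{b,A}\leq\tilde T$ on $\{T_{b,A}\geq\nu\}$, while $(T_{b,A}-\nu+1)^{+}$ vanishes on $\{T_{b,A}\leq\nu-1\}$. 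Taking conditional expectations yields
\[
\Exp_{\nu,A}^{\theta}\bigl[(T_{b,A}-\nu+1)^{+}\,\big|\,\cF_{\nu-1}\bigr]\leq\Exp_{\nu,A}^{\theta}\bigl[\tilde T-\nu+1\,\big|\,\cF_{\nu-1}\bigr].
\]

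Second, I would observe that $\tilde T-\nu+1$ depends on $\cF_{\nu-1}$ only through the window of $w$ pairs $(X_{\nu-j},A_{\nu-j})_{j=1}^{w}$, which feeds $\hat\theta_{n}$ at the transient times $n\in[\nu,\nu+w-1]$; beyond $n=\nu+w$ the MLE uses only post-change observations. By Assumption~\ref{ass:basic}, under $\Pro_{\nu,A}^{\theta}$ conditioned on $\cF_{\nu-1}$ the shifted sequence $(X_{\nu+k-1},A_{\nu+k-1})_{k\geq 1}$ has the same conditional law as $(X_{k},A_{k})_{k\geq 1}$ under $\Pro_{1,A}^{\theta}$ once the first $w$ history entries of the latter are matched to the conditioning values, and $\tilde W_{\nu-1}=0=W_{w,A}$ already aligns the initial statistic. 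Taking the essential supremum over $\cF_{\nu-1}$ on the left, the worst case over initial-window realizations is then subsumed by the supremum over the first $w$ actions on the right (with the first $w$ post-change observations integrated out); a final $\sup_{\nu}$ delivers the claim.

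The main obstacle is the final reduction: the MLE inputs on the LHS during $[\nu,\nu+w-1]$ contain pre-change observations, whereas on the RHS during $[w+1,2w]$ they contain post-change ones. I expect this to be addressed by noting that $\hat\theta_{n}$ is a deterministic function of its window inputs, that pre- and post-change densities share the common dominating measure $\lambda$ so that their supports agree, and hence that any worst-case MLE trajectory induced by a pre-change realization on the LHS is already achievable, and thus dominated, by appropriately chosen initial actions $a_{1},\ldots,a_{w}$ on the RHS.
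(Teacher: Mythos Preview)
Your restart is one window too early, and the patch you propose in the last paragraph does not close the gap. By setting $\tilde W_{\nu-1}=0$, the increments $\Lambda^{\hat\theta_n}_{n,A}$ for $n\in[\nu,\nu+w-1]$ involve MLEs $\hat\theta_n$ that are \emph{deterministic functions of the pre-change data} $X_{\nu-w},\ldots,X_{\nu-1}$ once you condition on $\cF_{\nu-1}$. Taking $\esssup$ over $\cF_{\nu-1}$ therefore fixes a worst-case deterministic sequence $(\hat\theta_\nu,\ldots,\hat\theta_{\nu+w-1})$. On the right-hand side, however, conditioning on $A_1=a_1,\ldots,A_w=a_w$ fixes only the actions; the observations $X_1,\ldots,X_w$ are still random under $\Pro_{1,A}^\theta$ and are \emph{averaged}, not maximized. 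Hence the $\sup_{a_1,\ldots,a_w}$ on the right cannot in general dominate the $\esssup$ over pre-change realizations on the left, and your ``achievability'' argument (that a bad MLE path can be reproduced by choosing actions) confuses a worst-case-over-data bound with an expectation-over-data bound. Common support of the pre- and post-change densities is irrelevant here.

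The paper avoids this entirely by restarting at time $\nu+w-1$ rather than $\nu-1$: define $W^\nu_{\nu+w-1,A}=0$ and $T_{b,A}^\nu=\inf\{n>\nu+w-1:W^\nu_{n,A}\geq b\}$. Then every MLE $\hat\theta_n$ entering the restarted statistic, for $n\geq\nu+w$, uses only observations from times $\geq\nu$, all post-change. The price is the elementary inequality $(x)^+\leq w+(x-w)^+$, which costs an additive $w$ up front; but since $T_{b,A}^\nu-\nu+1>w$ by construction and the right-hand side $T_{b,A}$ already satisfies $T_{b,A}>w$, this $w$ is absorbed in the final bound. After this shift the only dependence on $\cF_{\nu-1}$ is through the controls $A_\nu,\ldots,A_{\nu+w-1}$, which is exactly what the $\sup_{a_1,\ldots,a_w}$ on the right captures.
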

\begin{proof}

For each $\nu \geq  1$, we denote by $T_{b,A}^\nu$ a version of the stopping time in \eqref{eq:Tba} that 
starts computing the test statistic from time $\nu+w$, with initialization  0 at time $\nu+w-1$, i.e.,
\begin{equation} \label{eq:Tbam}
T_{b,A}^\nu:= \inf\left\{ n >\nu+w-1 :  W_{n,A}^\nu  \geq b\right\}.
\end{equation}

with
\begin{align} \label{eq:CuSum_stat_m}
\begin{split}
W_{n,A}^\nu &= \max\{ W_{n-1,A}^\nu, 0\} + \Lambda^{\hat{\theta}_{n}}_{n,A},~~ n > \nu+w-1, \\
W_{\nu+w-1,A}^\nu &=0.
\end{split}
\end{align}
Then, for any  $\nu \geq  1$ we have 
\begin{align*}
\Exp_{\nu,A}^{\theta} \left[(T_{b,A}-\nu+1)^+ \; |  \; \cF_{\nu-1} \right] & \stackrel{\text{(a)}}{\leq} w + \Exp_{\nu,A}^{\theta} \left[(T_{b,A}-\nu+1-w)^+ \; |  \; \cF_{\nu-1} \right]\\
&\stackrel{\text{(b)}}{\leq}  w+\Exp_{\nu,A}^{\theta} \left[T^\nu_{b,A}-\nu+1 -w \; |  \; \cF_{\nu-1} \right]\\
& \stackrel{\text{(c)}}{=}  \Exp_{\nu,A}^{\theta} \left[\Exp_{\nu,A}^{\theta} \left[T_{b,A}^{\nu} -\nu+1  \;  |   \; A_{[\nu,\nu+w-1]} , \; \cF_{\nu-1}\right]\; | \; \cF_{\nu-1} \right]\\
& \stackrel{\text{(d)}}{=}  \Exp_{\nu,A}^{\theta} \left[ \Exp_{\nu,A}^{\theta} \left[T_{b,A}^{\nu} -\nu+1   \; |   \; A_{[\nu,\nu+w-1]} \right]
  \; | \;  \cF_{\nu-1}  \right]\\
& \leq     \sup_{a_1, \ldots, a_{w}} \Exp_{\nu,A}^{\theta} \left[T_{b,A}^{\nu} -\nu+1  \;   |   \;  A_{\nu}=a_1, \ldots, A_{\nu+w-1}=a_w  \right]\\
& \stackrel{\text{(e)}}{=}   \sup_{a_1, \ldots, a_{w}} \Exp_{1,A}^{\theta} \left[T_{b,A} \; |   \;  A_{1}=a_1, \ldots, A_{w}=a_w  \right],
\end{align*}
where step (a) holds because $x^+\leq w+ (x-w)^+$ for any $x, w >0$. Step (b) follows because for $n > \nu + w$, the test statistic $W_{n,A}$ (defined in \eqref{eq:CuSum_stat}) is greater than or equal to $W_{n,A}^\nu$ (defined in \eqref{eq:CuSum_stat_m}), thus, $T_{b,A} \leq T^\nu_{b,A}$. Also, by definition (see \eqref{eq:Tbam}), $T^\nu_{b,A} > \nu +w-1$, which means that the positive part can be dropped. Step (c)  is due to the law of iterated expectation (and as explained at the end of Section~\ref{sec:prob_form}, $A_{[\nu,\nu+w-1]}$ denotes the set $\{A_{\nu}, \ldots, A_{\nu+w-1}\}$). Step (d) holds because  
$T_{b,A}^{\nu}$ depends on the data before time $\nu$ only through the controls $A_{[\nu,\nu+w-1]}$, which determine the distributions of the data $X_{[\nu+1,\nu+w]}$. Finally, step (e) holds due to the definition of $T_{b,A}^{\nu}$ and the fact that $T_{b,A}= T_{b,A}^{\nu}$ for $\nu=1$. 
\end{proof}

Next, we establish an auxiliary result for the  MLE, which  applies to any control policy that samples uniformly at random from the set $\bA$ at the subsequence of time instances $\cN$.  Specifically, at each time $n> w$, we upper bound the conditional
probability that the MLE at some time instant makes an error in identifying the true post-change parameter $\theta$, given all the already collected data apart from those at the $w$ most recent time instants.

\begin{lemma} \label{lem:cons}
If $A \in \cA$  samples uniformly at random from $\bA$ at the subsequence of time instances $\cN$, then:
\begin{equation} \label{eq:corr}
\Pro_{1,A}^\theta \{ \hat{\theta}_n \neq \theta \, |\,  \cF_{n-w-1}, A_{[1,w]} \} \leq |\Theta-1| \, \rho^{q}, \quad \forall \, n >w.
\end{equation}

\end{lemma}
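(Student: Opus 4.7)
The plan is to (i) reduce the event $\{\hat\theta_n \neq \theta\}$ to a union of likelihood-ratio tail events controlled by Markov's inequality on a Bhattacharyya statistic, and then (ii) exploit the forced uniform exploration at the times in $\cN$ to drive each resulting factor below $\rho$. Specifically, by the definition of the MLE in \eqref{eq:ML}, the event $\{\hat\theta_n \neq \theta\}$ forces $\prod_{m=n-w}^{n-1} f^{\tilth}_{A_m}(X_m)/f^\theta_{A_m}(X_m) \geq 1$ for some $\tilth \neq \theta$, so a conditional union bound followed by Markov's inequality applied to the square root of each likelihood ratio yields
\[
\Pro_{1,A}^\theta\!\left(\hat\theta_n \neq \theta \,\big|\, \cF_{n-w-1}, A_{[1,w]}\right) \leq \sum_{\tilth \neq \theta} \Exp_{1,A}^\theta\!\left[\prod_{m=n-w}^{n-1}\sqrt{\frac{f^{\tilth}_{A_m}(X_m)}{f^{\theta}_{A_m}(X_m)}}\,\bigg|\,\cF_{n-w-1}, A_{[1,w]}\right].
\]
It therefore suffices to bound each of the $|\Theta|-1$ inner expectations by $\rho^q$.

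To do this, I would introduce the auxiliary filtration
\[
\cH_m := \cF_{n-w-1} \vee \sigma\!\bigl(A_{[1,w]}, X_{[n-w,m]}, U_{[n-w, m-1]}\bigr), \quad m \geq n-w-1,
\]
which reveals all data and randomization through time $m$ but \emph{excludes} $U_m$, and then prove by backward induction on $m$ (from $n$ down to $n-w$) that, with $Z_j := f^{\tilth}_{A_j}(X_j)/f^{\theta}_{A_j}(X_j)$,
\[
\Exp_{1,A}^\theta\!\left[\prod_{j=m}^{n-1}\sqrt{Z_j}\,\Big|\,\cH_{m-1}\right] \leq \rho^{|\cN \cap [m, n-1]|}.
\]
The inductive step hinges on a dichotomy for $\Exp[\sqrt{Z_m}\mid \cH_{m-1}]$. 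If $m \in \cN$, then $A_m$ is chosen uniformly on $\bA$ via the $\cH_{m-1}$-independent variable $U_{m-1}$, so integrating out first $A_m$ uniformly and then $X_m \sim f^\theta_{A_m}$ produces exactly $\frac{1}{|\bA|}\sum_{a\in\bA}\int \sqrt{f^{\tilth}_a f^\theta_a}\,d\lambda = \rho(\theta,\tilth) \leq \rho$. If instead $m \notin \cN$ with $m > w$ (respectively $m \leq w$), the Chernoff rule (respectively the conditioning on $A_{[1,w]}$) makes $A_m$ already $\cH_{m-1}$-measurable, so the same integration yields $\rho(\theta,\tilth,A_m) \leq 1$. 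Because the inductive-hypothesis bound $\rho^{|\cN\cap[m+1,n-1]|}$ is a deterministic constant, it can be pulled outside the conditional expectation cleanly at each step and combined with the per-step factor.

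Initializing the induction at $m = n-w$ and using that the window $[n-w, n-1]$ contains the prescribed number $q$ of exploration times from $\cN$ produces the factor $\rho^q$, and summing over $\tilth \neq \theta$ completes the proof. The main obstacle is the design of the filtration $\cH_m$: it must be rich enough to make every Chernoff-chosen action $A_m$ measurable and to legitimize treating $\sqrt{Z_m}$ as an increment of a likelihood ratio under $f^\theta_{A_m}$, and simultaneously small enough to keep $U_{m-1}$ hidden, so that at the exploration times the uniform draw of $A_m$ genuinely \emph{averages} the Bhattacharyya coefficient over $\bA$ and produces the strictly sub-unit factor $\rho(\theta,\tilth)$ rather than a data-dependent $\rho(\theta,\tilth,A_m)$ that need not be below $1$.
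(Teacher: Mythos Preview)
Your proposal follows essentially the same route as the paper: union bound over $\tilth\neq\theta$, Markov's inequality on the square root of the likelihood ratio (the paper writes this as $\exp\{-\tfrac12 S_n(\theta,\tilth)\}$), then peel off the factors one at a time, collecting $\rho(\theta,\tilth)\leq\rho$ at each $m\in\cN$ and $\rho(\theta,\tilth,A_m)\leq 1$ otherwise. Your auxiliary filtration $\cH_m$ is a more careful device than the paper's direct use of $\cF_{m-1}$, since at exploration times $A_m$ is built from $U_{m-1}\in\cF_{m-1}$ and the paper's claim that $\Exp[\cdot\mid\cF_{m-1}]=\rho(\theta,\tilth)$ tacitly averages over a variable that is already measurable; your construction keeps $U_{m-1}$ hidden and makes that averaging legitimate. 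One small point: the lemma is stated for any $A\in\cA$ that explores uniformly on $\cN$, so at non-exploration times you should not invoke the Chernoff rule to argue $A_m$ is $\cH_{m-1}$-measurable---but this is harmless, since even a randomized $A_m$ still gives $\Exp[\sqrt{Z_m}\mid\cH_{m-1}]\leq 1$ after averaging over its conditional law.
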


\begin{proof}
Fix an arbitrary integer  $n$ with $n>w$.
From the definition of the MLE in \eqref{eq:ML} it is clear that  $\hat{\theta}_n = \theta$ if $S_n (\theta, \tilth) > 0$  for all $\tilth \neq \theta$, 
where 
\begin{align*}
    S_n (\theta, \tilth) &:= \sum_{m=n-w}^{n-1}  Z_m (\theta, \tilth), \quad Z_m (\theta, \tilth) :=  \log \left( \frac{f_{A_m}^\theta (X_n)}{f_{A_m}^{\tilth} (X_n)}\right).
\end{align*}
Therefore, 
\begin{align*}
\Pro_{1,A}^\theta \{\hat{\theta}_n \neq \theta  \, |\, \cF_{n-w-1}, \,  A_{[1,w]}\} & = \Pro_{1,A}^\theta \left( \bigcup_{\tilth \neq \theta} \{ S_n (\theta, \tilth) \leq 0\}  \, \Bigg|\, \cF_{n-w-1},  \,   A_{[1,w]}\right),
\end{align*}
and  by an application of  the union bound we obtain 
\begin{align}
\Pro_{1,A}^\theta \{\hat{\theta}_n \neq \theta  \, |\,  \cF_{n-w-1}, \,  A_{[1,w]}\}
& \leq \sum_{\tilth \neq \theta}  \Pro_{1,A}^\theta\{S_n (\theta, \tilth) \leq 0  \, |\, \cF_{n-w-1}, \,  A_{[1,w]}
  \} \label{eq:UB}.
\end{align}
Furthermore, by  the conditional Markov inequality it follows that:
\begin{align} \label{eq:MEUB}
\begin{split}
\Pro_{1,A}^\theta\{S_n (\theta, \tilth) \leq 0  \, |\, \cF_{n-w-1},  A_{[1,w]} \}
& \leq \Exp_{1,A}^{\theta} \left[ \exp\left\{-\frac{1}{2} S_n (\theta, \tilth)\right\}  \, \bigg|\,  \cF_{n-w-1}, \, 
 A_{[1,w]} \right]. \\
\end{split}
\end{align}

By the conditional independence of $Z_m (\theta, \tilth)$ given $\cF_{m-1}$ and \eqref{bhat}-\eqref{bhat_average} we have
\begin{align*}
&\Exp_{1,A}^{\theta} \left[ \exp\left\{-\frac{1}{2} Z_m (\theta, \tilth)\right\} \, \Bigg| \, \cF_{m-1}\right]\\
& =\Exp_{1,A}^{\theta} \left[  \sqrt{\left( \frac{f_{A_m}^{\tilth} (X_n)}{f_{A_m}^{\theta} (X_n)}\right) } 
\, \Bigg| \, \cF_{m-1} \right]
= \begin{cases}
\rho (\theta, \tilth) &  \text{if $m \in \cN$}\\
\rho (\theta, \tilth, A_m) & \text{otherwise}
\end{cases}\\
& \leq \; \rho^{\mathbb{I}\{m \in \cN\} },
\end{align*}
where $\rho$ is defined in \eqref{max_Bhat} and $\mathbb{I}$ is the indicator function. Applying the  the previous inequality for $m=n-1$  we obtain
\begin{align*}
\Exp_{1,A}^{\theta} \left[ e^{-\frac{1}{2} S_n (\theta, \tilth)}  \, |\,
\cF_{n-w-1},\, A_{[1,w]}\right]
& = \Exp_{1,A}^{\theta} \left[ e^{-\frac{1}{2} S_{n-1} (\theta, \tilth)} \: \Exp_1^{\theta} \left[ e^{-\frac{1}{2} Z_{n-1} (\theta, \tilth)} \Big| \cF_{n-2}\right]  \, \Big|\, 
\cF_{n-w-1}, \, A_{[1,w]} \right]\\
&\leq  \Exp_{1,A}^{\theta} \left[ e^{-\frac{1}{2} S_{n-1} (\theta, \tilth)} \:   \rho^{\mathbb{I}\{(n-1) \in \cN\}}  \, \Big|\, \cF_{n-w-1}, \,  A_{[1,w]} \right] .
\end{align*}
Here, the equality follows by an application of the law of iterated expectation, since $w\geq 1$ implies that $\cF_{n-2} \supseteq \cF_{n-w-1}$, and $n >w$ implies that $A_{[1,w]}$, which is $\cF_{w-1}$-measurable,  is also $\cF_{n-2}$-measurable. 

Repeating the same argument $w$ times  we obtain:
\begin{align}
\begin{split}
\Exp_{1,A}^{\theta} \left[ \exp\left\{-\frac{1}{2} S_n (\theta, \tilth)\right\}  \, \Big|\, \cF_{n-w-1}, \,  A_{[1,w]}   \right] & \leq  \rho^{ \sum_{m=n-w}^{n-1} \mathbb{I} \{ m \in \cN\} } \\
& \leq  \rho^{q}.
\label{eq:B_bound}
\end{split}
\end{align}
where the second inequality follows from \eqref{sampling}, according to which at any time-period of length $w$ there are at most $q$ times in $\cN$. Combining \eqref{eq:B_bound}, along with \eqref{eq:UB} and \eqref{eq:MEUB} completes the proof.
\end{proof}

We next  establish a lower bound on the drift of the log-likelihood ratio process that drives the WCC procedure in the post-change regime. This lower bound will be useful in upper bounding the delay of the WCC procedure. For this, we need to introduce some additional notation. Thus, 
for every $u,\theta\in \Theta$, $a\in \mathbb{A}$, we set 
\begin{align}
B_a(\theta,u) & \equiv 
\int \log (f_a^u / f_a^{\theta_0}) f_a^\theta d \lambda \label{eq: Badef}\\
&= D(f_a^\theta\| f_a^{\theta_0}) - D(f_a^\theta\| f_a^u) = I_a^\theta - D(f_a^\theta\| f_a^u)
\nonumber.
\end{align}
Note that
\[
B_a(\theta,\theta) = I_a^\theta.
\]
Also, let
\begin{equation} \label{eq:astaru}
a^* (u) \in \arg \max_{a\in \mathbb{A}} I_a^u = \arg \max_{a\in \mathbb{A}}D(f_a^u\| f_a^{\theta_0}).
\end{equation}
Note that $a^* (u)$ is an optimal control for the WCC procedure at time $n \notin \cN$, when the post-change parameter estimate $\hat{\theta}_n=u$ (see \eqref{eq:Anstar}). Also, it is clear that:
\begin{equation}\label{eq:Bastar}
B_{a^* (\theta)} (\theta,\theta) = I_{a^*(\theta)}^\theta = \max_{a\in \mathbb{A}} I_a^\theta = I^\theta.
\end{equation}

\begin{lemma} \label{lemma:LB on Exp}
Let $n>w$. Then,
\begin{align}
\Exp_{1,A^*}^\theta[\Lambda_{n,A^*}^{\hat{\theta}_n} \, | \,  \cF_{n-w-1}, \,   A_{[1,w]}^* ]\geq
\begin{cases}
I^\theta- J^\theta  \, \rho^{q}, & \quad n \notin \cN \\
\frac{1}{|\mathbb{A}|} \sum_{a \in \mathbb{A}} (I_a^\theta - K_a^\theta \rho^q),  &\quad n \in \cN \\
\end{cases} \label{eq:lem4}
\end{align}
where 
\begin{align*}
J^\theta &\equiv   (|\Theta|-1) \cdot (I^\theta- \min_{u \neq \theta} B_{a^*(u)}(\theta,u) ), \quad \\
K_a^\theta &\equiv   (|\Theta|-1) \cdot (I_a^\theta- \min_{u \neq \theta} B_{a}(\theta,u) ),
\end{align*}
with $a^*(u)$ as defined in \eqref{eq:astaru}.
Moreover, for large enough $q$, both  lower bounds in \eqref{eq:lem4} are strictly positive.
\end{lemma}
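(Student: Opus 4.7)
The plan is to compute the conditional expectation in \eqref{eq:lem4} by iterating the tower property, starting with conditioning on $\cF_{n-1}$. Under $\Pro_{1,A^*}^\theta$ (i.e., with $\nu=1$) and given $\cF_{n-1}$ and $A_n^*$, the observation $X_n$ has density $f^\theta_{A_n^*}$, and $\hat{\theta}_n$ is $\cF_{n-1}$-measurable by \eqref{eq:ML}. Hence the inner expectation reduces to $B_{A_n^*}(\theta,\hat{\theta}_n)$ via the definition in \eqref{eq: Badef}. The two cases of the lemma then correspond to whether $A_n^*$ is the exploitation choice $a^*(\hat{\theta}_n)$ or an independent uniform draw on $\mathbb{A}$.

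For $n\notin\cN$, the inner expectation is $B_{a^*(\hat{\theta}_n)}(\theta,\hat{\theta}_n)$. I would decompose this as $I^\theta\,\mathbb{I}\{\hat{\theta}_n=\theta\} + B_{a^*(\hat{\theta}_n)}(\theta,\hat{\theta}_n)\,\mathbb{I}\{\hat{\theta}_n\neq\theta\}$ using \eqref{eq:Bastar}, lower-bound the second term by $\min_{u\neq\theta}B_{a^*(u)}(\theta,u)\cdot\mathbb{I}\{\hat{\theta}_n\neq\theta\}$, and rewrite the whole expression as $I^\theta$ minus a nonnegative correction times $\mathbb{I}\{\hat{\theta}_n\neq\theta\}$. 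Nonnegativity holds because $B_{a^*(u)}(\theta,u)=I^\theta_{a^*(u)}-D(f^\theta_{a^*(u)}\|f^u_{a^*(u)})\leq I^\theta_{a^*(u)}\leq I^\theta$. Taking the outer conditional expectation given $\mathcal{G}:=\cF_{n-w-1}\vee\sigma(A^*_{[1,w]})$ and invoking Lemma~\ref{lem:cons} to bound $\Pro_{1,A^*}^\theta(\hat{\theta}_n\neq\theta\,|\,\mathcal{G})$ by $(|\Theta|-1)\rho^q$ yields the first bound $I^\theta - J^\theta\rho^q$.

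For $n\in\cN$, the control $A_n^*$ is a fresh uniform draw on $\mathbb{A}$ generated from $U_{n-1}$, which is independent of the sub-$\sigma$-algebra $\cF_{n-2}\vee\sigma(X_{n-1})$ that already contains $\hat{\theta}_n$ as well as $\mathcal{G}$. Averaging over $A_n^*$ first gives $|\mathbb{A}|^{-1}\sum_{a\in\mathbb{A}}B_a(\theta,\hat{\theta}_n)$. Applying the same correct-versus-incorrect MLE decomposition pointwise in $a$, combined with Lemma~\ref{lem:cons}, produces $\Exp_{1,A^*}^\theta[B_a(\theta,\hat{\theta}_n)\,|\,\mathcal{G}] \geq I^\theta_a - K_a^\theta\rho^q$, and averaging over $a\in\mathbb{A}$ gives the second bound.

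For the final positivity claim, $\rho<1$ by \eqref{max_Bhat} together with Assumptions~\ref{ass:a2} and \ref{ass:a3}, while $I^\theta>0$ and $\sum_{a\in\mathbb{A}} I_a^\theta\geq I^\theta>0$ by Assumption~\ref{ass:a2}. Hence both bounds are of the form (positive constant) minus $O(\rho^q)$ and become strictly positive for $q$ large enough. The main subtlety I anticipate is keeping the filtrations and independence structure precise: at exploration times, both $\hat{\theta}_n$ and $A_n^*$ are $\cF_{n-1}$-measurable, but the randomizer $U_{n-1}$ driving $A_n^*$ is independent of $\hat{\theta}_n$, so the averaging step must be justified by isolating the appropriate sub-$\sigma$-algebra that omits $U_{n-1}$. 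A secondary issue is that $B_a(\theta,u)$ can be arbitrarily negative when $D(f^\theta_a\|f^u_a)$ is large, so the lower-bound-via-minimum step cannot be replaced by the cruder bound $\geq 0$.
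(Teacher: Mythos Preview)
Your proposal is correct and follows essentially the same route as the paper's own proof: compute the inner expectation of $\Lambda_{n,A^*}^{\hat{\theta}_n}$ as $B_{A_n^*}(\theta,\hat{\theta}_n)$, split on the event $\{\hat{\theta}_n=\theta\}$ versus $\{\hat{\theta}_n\neq\theta\}$ using \eqref{eq:Bastar}, lower-bound the incorrect-MLE contribution by the minimum over $u\neq\theta$, and then invoke Lemma~\ref{lem:cons} to control $\Pro_{1,A^*}^\theta(\hat{\theta}_n\neq\theta\,|\,\mathcal{G})$. The paper writes this as a direct decomposition $\sum_{u\in\Theta} B_{a^*(u)}(\theta,u)\cdot\Pro(\hat{\theta}_n=u\,|\,\cdots)$ rather than via an explicit tower through $\cF_{n-1}$, but the content is identical; your added care about isolating $U_{n-1}$ at exploration times is a point the paper leaves implicit.
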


\begin{proof}
When  $n \notin \cN$, $A_n^*$ is chosen according to \eqref{eq:Anstar}, and 
\begin{align*}
&\Exp_{1,A^*}^\theta[\Lambda_{n,A^*}^{\hat{\theta}_n} \, | \, \cF_{n-w-1}, \,  A_{[1,w]}^* ] \\
&= \sum_{u \in \Theta}  
\Exp_{1,A^*}^\theta\left[\log \frac{f_{A_n^*}^{\hat{\theta}_n} (X_n)}{f_{A_n^*}^{\theta_0} (X_n)} \, \Bigg| \, \hat{\theta}_n=u, \, \cF_{n-w-1}, \,  A_{[1,w]}^* \right]
\cdot \Pro^\theta_{1,A^*} (\hat{\theta}_n=u \, |\, \cF_{n-w-1}, \, A_{[1,w]}^* )\\
& \stackrel{\text{(a)}}{=} \sum_{u \in \Theta} \left( \int \log \left(\frac{f_{a^*(u)}^u}{f_{a^*(u)}^{\theta_0}} \right) \, f_{a^*(u)}^\theta \,  d \lambda \right) \cdot \Pro^\theta_{1,A^*} (\hat{\theta}_n=u \, |\, \cF_{n-w-1}, \,   A_{[1,w]}^* )\\
&= \sum_{u \in \Theta} B_{a^*(u)}(\theta,u)
\cdot \Pro^\theta_{1,A^*} (\hat{\theta}_n=u \, |\, \cF_{n-w-1}, \,  A_{[1,w]}^* )\\
&\stackrel{\text{(b)}}{\geq}  I^\theta \cdot  \Pro_{1,A^*}^\theta (\hat{\theta}_n = \theta \, | \, \cF_{n-w-1} \, , \,   A_{[1,w]}^* )  + \min_{u \neq \theta} B_{a^*(u)}(\theta,u) \cdot \Pro_{1,A^*}^\theta (\hat{\theta}_n \neq \theta 
\, | \,   \cF_{n-w-1} \, , \,    A_{[1,w]}^* ) \\
&=  I^\theta - \left(I^\theta-\min_{u \neq \theta} B_{a^*(u)}(\theta,u)\right) \cdot  \Pro_{1,A^*}^\theta (\hat{\theta}_n \neq \theta \, | \,  \, \cF_{n-w-1} \, , \,   A_{[1,w]}^* ) \\
&\stackrel{\text{(c)}}{\geq}   I^\theta - \left(I^\theta-\min_{u \neq \theta} B_{a^*(u)}(\theta,u)\right)  \cdot (|\Theta|-1) \cdot \rho^{q},
\end{align*}
where (a) follows from \eqref{eq:Anstar} and \eqref{eq:astaru}, and from the fact that conditioned on $\hat{\theta}_n =u$, the optimal control $A_n^*= a^*(u)$ and $X_n$ is independent of everything else. Step (b) follows from \eqref{eq:Bastar}, and (c) from Lemma~\ref{lem:cons}. This proves the first inequality
in \eqref{eq:lem4}.

When $n \in \cN$, $A_n^*$ is chosen uniformly at random from the set $\mathbb{A}$, and following steps similar to those used in the case when $n\notin \cN$, we obtain:
\begin{align*}
&\Exp_{1,A^*}^\theta[\Lambda_{n,A^*}^{\hat{\theta}_n} \, | \, \cF_{n-w-1} \, ,   \,  A_{[1,w]}^* ] \\
&= \sum_{u \in \Theta}  
\Exp_{1,A^*}^\theta \left[\log  \left( \frac{f_{A_n^*}^{\hat{\theta}_n} (X_n)}{f_{A_n^*}^{\theta_0} (X_n)} \right) \, \Bigg| \, \hat{\theta}_n=u, \, \cF_{n-w-1} \, ,   \,   A_{[1,w]}^* \right]
\cdot \Pro^\theta_{1,A^*} (\hat{\theta}_n=u \, |\, \cF_{n-w-1} \, ,   \,   A_{[1,w]}^* )\\
&=  \sum_{u \in \Theta} \frac{1}{|\mathbb{A}|} \sum_{a \in \mathbb{A}} B_a(\theta,u) \cdot \Pro^\theta_{1,A^*} (\hat{\theta}_n=u  \, | \,  \, \cF_{n-w-1}\, ,   \,    A_{[1,w]}^*) \\
&= \frac{1}{|\mathbb{A}|} \sum_{a \in \mathbb{A}} \sum_{u \in \Theta}
B_a(\theta,u) \cdot \Pro^\theta_{1,A^*} (\hat{\theta}_n=u  \, | \, \cF_{n-w-1} \, ,   \, A_{[1,w]}^*) \\
&\geq  
\frac{1}{|\mathbb{A}|} \sum_{a \in \mathbb{A}} \bigg[
I_a^\theta \cdot  \Pro_{1,A^*}^\theta (\hat{\theta}_n = \theta \, | \,  \cF_{n-w-1}\, ,   \,    A_{[1,w]}^* )  \\
& \hspace*{1.5in} + \min_{u \neq \theta} B_{a}(\theta,u) \cdot \Pro_{1,A^*}^\theta (\hat{\theta}_n \neq \theta 
\, | \,    \cF_{n-w-1} \, ,   \,  A_{[1,w]}^* ) \bigg]\\
& \geq 
\frac{1}{|\mathbb{A}|} \sum_{a \in \mathbb{A}} \left[I_a^\theta - \left(I_a^\theta-\min_{u \neq \theta} B_{a}(\theta,u)\right)  \cdot (|\Theta|-1) \cdot \rho^{q}\right],
\end{align*}
which proves the second inequality
in \eqref{eq:lem4}. 

Finally, we note  that from \eqref{eq: Badef} it follows that, for all $a \in \mathbb{A}$ and  $u \in \Theta$,
\[
B_a (\theta,u) \leq I_a^\theta \quad \text{and}\quad B_{a^*(u)} (\theta,u) \leq I^\theta.
\] Therefore,  $J^\theta \geq 0$ and $K_a^\theta \geq 0$, for all $a \in \mathbb{A}$. Nevertheless,  since $0 < \rho < 1$, as $q\to \infty$ we have: 
\[
I^\theta- J^\theta \, \rho^{q} \to I^\theta~ > ~ 0
\]
and 
\[
\frac{1}{|\mathbb{A}|} \sum_{a \in \mathbb{A}} (I_a^\theta - K_a^\theta \, \rho^q) \to \frac{1}{|\mathbb{A}|} \sum_{a \in \mathbb{A}} I_a^\theta ~ > ~ 0,
\]
which proves that both lower bounds in \eqref{eq:lem4} are  both positive for large enough $q>0$.
\end{proof}

\begin{theorem} \label{th:aub}
As $q, w, b \to \infty$ so that $q=o(w)$ and  $w =o(b)$, 
$$\sup_{a_1, \ldots, a_w 
\in \bA} \Exp_{1,A^*}^\theta[T_{b,A^*} \,|\, A^*_1=a_1, \ldots, A^*_w=a_w] \leq \frac{b}{I^\theta}\, (1+o(1)).$$
\end{theorem}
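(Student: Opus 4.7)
The plan is to dominate the WCC stopping time by a first-passage time of the log-likelihood sum and then control the tail by splitting it into a martingale part and a compensator part, each handled by a distinct concentration bound.

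The CuSum recursion \eqref{eq:CuSum_stat} with $W_{w,A^*} = 0$ yields the pathwise inequality $W_{n,A^*} \geq L_n := \sum_{m=w+1}^n \Lambda^{\hat\theta_m}_{m,A^*}$, so $T_{b,A^*} \leq N_b := \inf\{n > w : L_n \geq b\}$. Fixing an arbitrary $\epsilon \in (0, I^\theta)$ and setting $n_\epsilon := \lceil b/(I^\theta-\epsilon)\rceil$, one has
\[
\Exp^\theta_{1,A^*}[T_{b,A^*} \mid A^*_{[1,w]}] \;\leq\; n_\epsilon + \sum_{n > n_\epsilon} \Pro^\theta_{1,A^*}(L_n < b \mid A^*_{[1,w]}),
\]
so it suffices to show the tail sum is $o(b)$ uniformly in $A^*_{[1,w]}$. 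The leading term satisfies $n_\epsilon = (b/I^\theta)(1 + O(\epsilon))$, and letting $\epsilon \to 0$ at the end will give the claimed bound.

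For the tail, decompose $L_n = C_n + M_n$, where $C_n := \sum_{m=w+1}^n B_{A^*_m}(\theta, \hat\theta_m)$ is the $\{\cF_n\}$-predictable compensator (with $B$ as in \eqref{eq: Badef}, noting $A^*_m$ and $\hat\theta_m$ are $\cF_{m-1}$-measurable) and $M_n := L_n - C_n$ is a $\Pro^\theta_{1,A^*}$-martingale with conditional second moments bounded by a finite constant $V'$ via Assumption~\ref{ass:a1} and the finiteness of $|\Theta|, |\bA|$. Then $\{L_n < b\} \subseteq \{C_n < b + n\epsilon/4\} \cup \{M_n < -n\epsilon/4\}$. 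The martingale part is handled by an Azuma/Bernstein-type inequality, giving $\Pro(M_n < -n\epsilon/4 \mid A^*_{[1,w]}) \leq e^{-c_1 n}$; summing over $n > n_\epsilon$ yields $O(1) = o(b)$. For the compensator, observe that $B_a(\theta, u) \geq M_-$ for some finite constant $M_-$ and $B_{a^*(\theta)}(\theta, \theta) = I^\theta$, so
\[
C_n \;\geq\; (n-w) I^\theta - n_{\mathrm{bad}}\,(I^\theta - M_-), \quad n_{\mathrm{bad}} := |\{m \in [w+1,n]: \hat\theta_m \neq \theta \text{ or } m \in \cN\}|.
\]
Thus $\{C_n < b + n\epsilon/4\}$ forces $n_{\mathrm{bad}} \geq c_2 n \epsilon$ for some constant $c_2 > 0$ whenever $n > n_\epsilon$.

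The main obstacle is bounding $\Pro(n_{\mathrm{bad}} > c_2 n\epsilon \mid A^*_{[1,w]})$ exponentially in $n$. The deterministic $|\cN \cap [w+1,n]| \leq qn/w$ from \eqref{sampling} contributes at most $c_2 n\epsilon/2$ once $w$ is large (since $q=o(w)$). For the estimation error count $N^{\mathrm{err}}_n := \sum_{m=w+1}^n \mathbb{I}\{\hat\theta_m \neq \theta\}$, I would decompose the indices $\{w+1,\ldots,n\}$ into $w+1$ arithmetic-progression subsequences of spacing $w+1$; on each such subsequence, consecutive indices satisfy $\cF_{m_{k+1}-w-1} \supseteq \cF_{m_k}$, so Lemma~\ref{lem:cons} gives $\Pro(\hat\theta_{m_{k+1}} \neq \theta \mid \cF_{m_k}, A^*_{[1,w]}) \leq \delta_q := (|\Theta|-1)\rho^q$, and the indicators along the subsequence are stochastically dominated by iid $\mathrm{Bernoulli}(\delta_q)$. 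A Hoeffding/Chernoff bound per subsequence plus a union bound yields
\[
\Pro(N^{\mathrm{err}}_n > c_2 n \epsilon/2 \mid A^*_{[1,w]}) \;\leq\; (w+1)\exp\bigl(- D\,(n-w)/(w+1)\bigr),
\]
with $D \to \infty$ as $q \to \infty$ (since $\delta_q \to 0$). Summing geometrically over $n > n_\epsilon$ gives a bound of order $(w+1)^2 \exp\bigl(-D\,b/((w+1)(I^\theta-\epsilon))\bigr)$; because $b/w \to \infty$ and $D$ can be made arbitrarily large by choosing $q$ sufficiently large (while still $q = o(w)$), this is $o(b)$ in the prescribed regime. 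Combining with the martingale tail and the leading term, then letting $\epsilon \to 0$, yields the desired $\sup_{a_{[1,w]}} \Exp[T_{b,A^*} \mid A^*_{[1,w]} = a_{[1,w]}] \leq (b/I^\theta)(1+o(1))$.
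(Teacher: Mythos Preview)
Your overall strategy---dominate by the unreflected first-passage time and split $L_n$ into a predictable compensator plus a martingale---is a legitimate alternative to the paper's Wald/renewal approach via Proposition~\ref{th:appen}. The subsequence trick for turning Lemma~\ref{lem:cons} into a Bernoulli-domination bound on the error count is also correct and nicely done. However, there is a genuine gap at the martingale step.

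You invoke ``an Azuma/Bernstein-type inequality'' to obtain $\Pro(M_n < -n\epsilon/4 \mid A^*_{[1,w]}) \leq e^{-c_1 n}$. Under the paper's standing hypotheses this is not available: Assumption~\ref{ass:a1} gives only a uniform bound on the \emph{second} conditional moment of the log-likelihood increments. Azuma requires bounded martingale differences; Bernstein/Freedman requires bounded differences (or at least sub-exponential tails) in addition to a variance proxy. With second moments alone, the best pointwise bound is Chebyshev, $\Pro(M_n < -n\epsilon/4) \leq 16V'/( \epsilon^2 n)$, and $\sum_{n > n_\epsilon} 1/n$ diverges, so your tail sum does not close. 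This is precisely why the paper routes the argument through Proposition~\ref{th:appen}: that proposition bounds $\Exp[\tau_b]$ directly via a Wald identity and Lorden's overshoot technique (ladder variables, \eqref{show00}--\eqref{wald22}), which needs only the drift lower bound of Lemma~\ref{lemma:LB on Exp} and the second-moment condition \eqref{variance}---no concentration at all.

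If you want to preserve your tail-sum architecture, you would need either to strengthen the moment assumption (e.g., a uniform bound on $\Exp[e^{s|\Lambda^{\hat\theta_m}_{m,A^*}|}\mid\cF_{m-1}]$ for some $s>0$, which the paper does not assume), or to replace the pointwise exponential bound by an argument that controls $\Exp[(N_b - n_\epsilon)^+]$ directly from the second moment---but that essentially rediscovers the Wald/overshoot machinery of Proposition~\ref{th:appen}.
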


\begin{proof}

Define the stopping time
\begin{equation*}
T'_{b,A^*}:= \inf\left\{n > w:  W'_{n,A^*}  \geq b\right\}.
\end{equation*}
with
\begin{align} 
W'_{n,A^*} &=  W'_{n-1,A^*} + \Lambda^{\hat{\theta}_{n}}_{n,A^*}, \quad n > w, \label{eq:CuSum_stat_2}\\
W'_{w,A^*}&=0.\nonumber
\end{align}
Note from \eqref{eq:CuSum_stat} that the difference between $W'_{n,A^*}$ and $W_{n,A^*}$ is that we do not take the positive part on the right-hand-side of the recursion in \eqref{eq:CuSum_stat_2}. This means that $T_{b,A^*}\leq T'_{b,A^*}$. Therefore, it suffices to show that as $q, w, b \to \infty$ so that $q=o(w)$ and  $w =o(b)$, 
$$
\sup_{a_1, \ldots, a_w 
\in \bA} \Exp_{1,A^*}^\theta[T'_{b,A^*} \,|\, A^*_1=a_1, \ldots, A^*_w=a_w] \leq \frac{b}{I^\theta} \, (1+o(1)).$$
In what follows, we fix arbitrary $a_1, \ldots, a_w$ in $\bA$, and we show that 
$$\Exp_{1,A^*}^\theta[T'_{b,A^*} \,|\, A^*_1=a_1, \ldots, A^*_w=a_w] \leq \frac{b}{I^\theta} \, (1+o(1)).$$
To see this, we  make the following identifications in Proposition~\ref{th:appen} in the Appendix:
\begin{align*}
\Pro \leftrightarrow 
\Pro_{1,A^*}^\theta (\cdot \; | \; A^*_1=a_1, \ldots, A^*_w=a_w) , \quad \cG_n \leftrightarrow 
\cF_n, \qquad 
Y_n &\leftrightarrow \Lambda^{\hat{\theta}_n}_{n,A^*},   \quad \tau_b \leftrightarrow T'_{b,A^*} .
\end{align*}
We also observe that by Lemma \ref{lemma:LB on Exp} we can set $\mu_*, \mu^*, \mu, v$ in Proposition~\ref{th:appen} as follows:
\begin{align*}
\mu_* &\leftrightarrow \frac{1}{|\mathbb{A}|} \sum_{a \in \mathbb{A}} (I_a^\theta - K_a^\theta \, \rho^q), \quad 
\mu^* \leftrightarrow I^\theta- J^\theta  \, \rho^{q}, \quad 
\mu \leftrightarrow I^\theta, \quad 
V \leftrightarrow v.
\end{align*}
By Lemma \ref{lemma:LB on Exp} again we have $\mu^* \to  I^\theta=\mu$ as $q \to \infty$.
Thus, the condition in part (iii) of Proposition~\ref{th:appen} is satisfied, which implies the result. 



        \end{proof}\
We then have the following result that establishes the first-order optimality of the WCC procedure.        
        \begin{corollary}
    If $b=\log \gamma$, then as $q,w,b\to \infty$ so that 
$q=o(w)$ and $w =o(\log \gamma)$, then $$\cJ_\theta(A^*,T)\sim 
\inf_{ (A,T) \in \cC_\gamma}  \cJ_\theta(A,T) \sim \frac{\log \gamma}{ I^\theta} 
$$
    \end{corollary}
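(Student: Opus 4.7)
The plan is to combine the three main intermediate results already established in the paper: the false alarm bound from Lemma 1, the worst-case reduction from Lemma 3, the delay upper bound from Theorem 2, and the universal lower bound from Theorem 1.

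First, I would apply Lemma 1 with $b=\log \gamma$ to ensure that the WCC procedure satisfies the MTFA constraint, i.e., $(A^*, T_{b,A^*}) \in \cC_\gamma$. This guarantees that the infimum in the optimization problem is taken over a set that includes the WCC procedure, so that
\[
\inf_{(A,T) \in \cC_\gamma} \cJ_\theta(A,T) \;\leq\; \cJ_\theta(A^*, T_{b,A^*}).
\]
Next, I would invoke Lemma 3 to reduce the supremum over change-points and essential supremum over pre-change histories in Lorden's measure to a quantity involving only $\nu=1$ conditioned on the first $w$ actions:
\[
\cJ_\theta(A^*, T_{b,A^*}) \;\leq\; \sup_{a_1,\ldots,a_w \in \bA} \Exp_{1,A^*}^\theta [T_{b,A^*} \,|\, A_1^* = a_1, \ldots, A_w^* = a_w].
\]

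Then I would apply Theorem 2 under the stated growth regime ($q, w, b \to \infty$ with $q=o(w)$, $w=o(b)$, and the identification $b = \log \gamma$, so $w = o(\log \gamma)$ ensures $w = o(b)$), which gives the upper bound
\[
\sup_{a_1,\ldots,a_w} \Exp_{1,A^*}^\theta [T_{b,A^*} \,|\, A_1^* = a_1, \ldots, A_w^* = a_w] \;\leq\; \frac{\log \gamma}{I^\theta}(1+o(1)).
\]
Chaining the two inequalities yields $\cJ_\theta(A^*, T_{b,A^*}) \leq \frac{\log \gamma}{I^\theta}(1+o(1))$.

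For the matching lower bound, I would apply Theorem 1 directly, which states that for any $\theta \in \Theta$,
\[
\inf_{(A,T) \in \cC_\gamma} \cJ_\theta(A,T) \;\geq\; \frac{\log \gamma}{I^\theta}(1+o(1))
\]
as $\gamma \to \infty$. Combining this with the upper bound and the obvious sandwich
\[
\inf_{(A,T) \in \cC_\gamma} \cJ_\theta(A,T) \;\leq\; \cJ_\theta(A^*, T_{b,A^*})
\]
gives both asymptotic equivalences claimed in the corollary.

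There is no real obstacle here: the corollary is essentially the immediate synthesis of Theorem 1 (lower bound), Lemma 1 (threshold calibration), Lemma 3 (worst-case reduction), and Theorem 2 (upper bound). The only thing to verify is that the growth conditions imposed on $q$, $w$, and $\gamma$ in the corollary statement are compatible with the hypotheses of Theorem 2; this is immediate upon setting $b = \log \gamma$, since $w = o(\log \gamma)$ is exactly $w = o(b)$, and $q = o(w)$ is assumed. Hence the WCC procedure attains the universal lower bound to first order, simultaneously for every $\theta \in \Theta$.
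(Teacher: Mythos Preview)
Your proposal is correct and matches the paper's own proof, which simply states that the corollary follows by combining Theorem~\ref{th:LB}, Lemma~\ref{lem1}, Lemma~\ref{lem:worst-case-bd}, and Theorem~\ref{th:aub}; you have in fact spelled out the chaining in more detail than the paper does. One minor point: the worst-case reduction you invoke is Lemma~2 (Lemma~\ref{lem:worst-case-bd}) in the paper's numbering, not Lemma~3.
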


\begin{proof}
This follows by combining Theorem \ref{th:LB}, Lemma \ref{lem1},  Lemma \ref{lem:worst-case-bd}, and Theorem \ref{th:aub}.  
    \end{proof}

\section{Simulations} \label{sec:sim}
We consider a simulation study in which, for each $n \in \bN$, $X_n$ is a $K$-dimensional random vector $(X_n^1, \ldots, X_n^K)$ with independent components. We assume that each of these components has   variance 1 and that  the mean changes in an unknown subset of components, $\theta$.  Specifically, if $k \notin \theta$, each $X_n^k$ has mean 0, whereas if $k \in \theta$, then $X_n^k$ has mean 0 for $n<\nu$ and mean $\mu_k$ for $n\geq \nu$. Therefore, in this context, the parameter space, $\Theta$, consists of all non-empty subsets of $\{1, \ldots, K\}$. We further assume that it is possible to sample only one of these components at each time instant. Therefore, the action space $\mathbb{A}$ consists of all singletons $\{ \{k\}: 1\leq k \leq K\}$. 

We set $K=10$, 
$(\mu_1, \ldots, \mu_k)=(0.5, 0.5, 1, 1, 1, \ldots, 1)$, 
and we consider the case that only the first three components experience the change, i.e., the true  value of $\theta$ is $\{1,2,3\}$. That is, among the three affected components, only one of them has the largest possible drift. Moreover, we assume that  the  change  occurs at time $\nu=1$. 

We consider two schemes.

\begin{itemize}
\item[(i)]  The proposed \textit{Windowed Chernoff-CuSum} procedure, with two choices for the size of the window, $w$, $10$ and $20$. With this procedure, we  sample uniformly at random from all sources at the first $w$ time instants, and an alarm can be raised only after this time. For both window choices, we set $q=1$, that is, during each interval of the form $[m w, (m+1) w)$, where  $m \in \bN$, only once the component is selected at random. Apart from these times, at eact time after $w$, we estimate the  subset of components that have been affected by the change, and  select to sample among those  the one  with the largest post-change mean. This means that  whenever a component apart from the first two is estimated to be affected by the change, it can be selected for sampling next. In order to break such ties,  we select the component whose observations in the last $w$ time instants had the largest average.  This choice was found to lead to much better performance than random sampling among the components that are currently estimated as affected by the change.

\item The \textit{greedy} algorithm considered in \cite{Draga96,zhang-mei2020banditQCD_journal,Mei_Moust_2021,Anamitra2021,XuMeiPostUncert_SeqAn}, according to which we keep sampling the same component until its cumulative log-likelihood ratio statistic    hits  either the threshold $b$ or $0$. In the former case, the alarm is raised and the process is terminated. In the latter, all past observations are forgotten and the next component is sampled.  This algorithm has been shown  to enjoy certain asymptotic optimality properties 
when either only a single component is affected by the change \cite{Mei_Moust_2021} or all affected components have the same post-change distribution \cite{Anamitra2021}, but not in the setup of this  simulation study. Moreover, 
its  performance  depends heavily on the component that is being sampled when the change occurs. In order to capture its  average behavior, 
we simulate 
at random the component that  is being sampled at the time of the change. For comparison purposes, we also consider the  best case/oracle scenario, where  it is component 3, i.e., the one with the largest post-change mean among those affected,  that is being sampled at the time of the change in each simulation run.  
\end{itemize}

In all cases, the threshold $b$ is selected  as $\log \gamma$, where  $\gamma$ takes values in 
$\{10^{i}: i=1,\ldots, 16\}$.  Figure \ref{fig1} presents the results, where in the vertical axis we plot the average detection delays of the above  schemes,  based on $16,000$ simulation runs for each of them, and in the horizontal axis we plot $\log(\gamma)$.  
 
Based on this graph,  we can make the following observations. First, for the proposed Windowed Chernoff-CuSum,  the larger window, i.e., setting $w$ equal to  $20$ instead of $10$, leads to smaller expected detection delay for large enough values of $\gamma$.
This is consistent with our asymptotic theory, according to which the window  size $w$ needs to go to infinity as $\gamma \to \infty$ in order to achieve asymptotic optimality. Second, both these window choices  lead to expected detection delays  smaller than that  of the greedy algorithm. In fact,  the expected detection delay of the proposed scheme, with the larger window, approaches the oracle  for the greedy  algorithm as $\gamma$ increases. 


\begin{figure}
\centering
\includegraphics[width=0.8\linewidth]{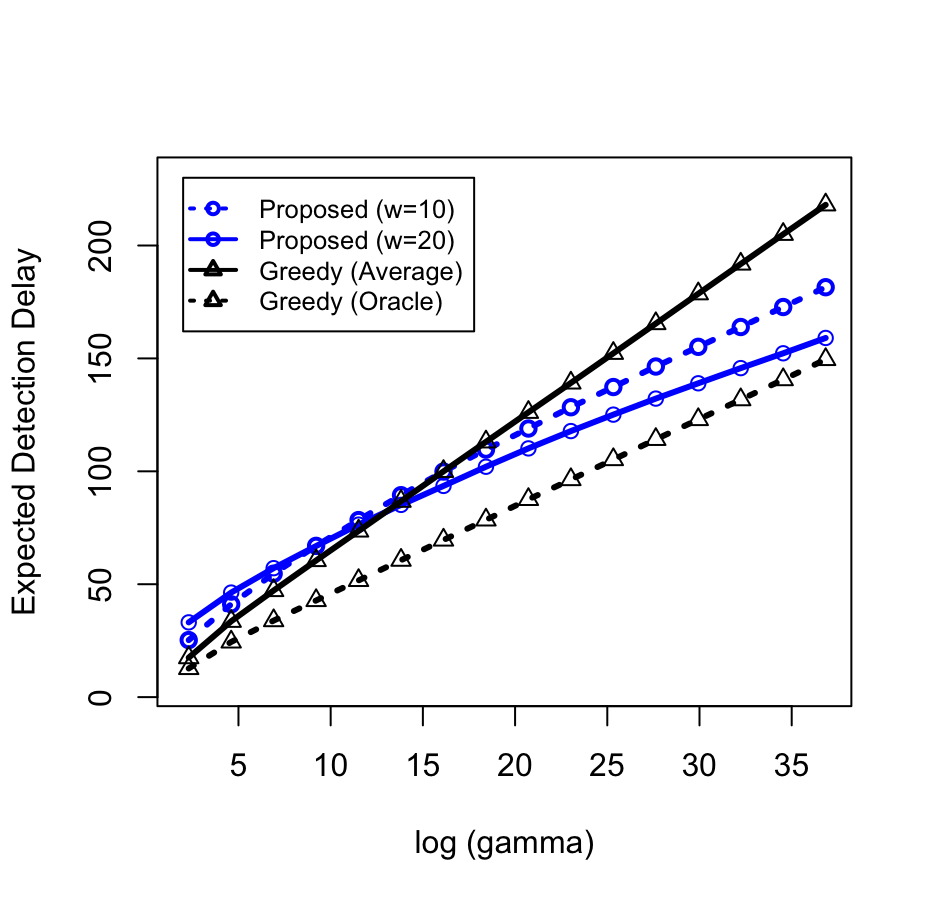} 
\caption{The vertical axis represents Expected Detection Delay and the horizontal axis  $\log \gamma$. The blue lines/circles correspond to the proposed, window-limited Chernoff-CuSum scheme with windows $w=10$ (dashed line) and $w=20$ (solid line). The black lines/triangles correspond to the greedy algorithm. The solid line corresponds to its average behavior, and the dashed line to an idealistic case.}
\label{fig1}
\end{figure}


\section{Conclusions} \label{sec:concl}
We considered the problem of quickest change detection when there is parameter  uncertainty regarding the post-change regime,  and 
the distribution of the observations is affected not only by the change but also by a control input that is chosen by the observer. We provided a precise mathematical formulation for a general setting of this problem. We then developed a procedure that is asymptotically optimal under Lorden's criterion as the false alarm rate goes to zero. 

Our asymptotic optimality result is established under the assumption of a finite action space and a finite parameter space regarding the post-change distribution. Therefore, a natural direction for further research is the theoretical analysis of this algorithm in the case of an infinite action space and an infinite post-change parameter space. Another direction  for further research is the consideration of more general stochastic models for the observations, in which the assumption of conditional independence is removed, in the spirit of \cite{niti-veer-sqa-2015}. Also, in our problem formulation, we implicity assumed that all control actions incur the same cost. Extending the formulation and the analysis to allow for non-uniform costs across control actions would be of interest in certain applications. Finally, the proposed algorithm relies on a window of past observations during which the post-change parameter is estimated, and this is subsequently used to assign the appropriate control. Our asymptotic theory requires that the length of this window goes to infinity as the false alarm rate goes to 0. An interesting question is to obtain more precise bounds for the worst-case conditional expected detection delay that could inform the selection of this window size. 




\appendix

\section{An auxillary lemma}


In this Appendix we present a general, non-asymptotic upper bound for the expected hitting time of a sequence for which  the conditional expectation of each increment given all past observations does not exceed a certain positive constant. This is the basis of the proof of Theorem \ref{th:aub}, but it is stated in a more abstract form, as it can be of  independent interest. \\

\begin{prop} \label{th:appen}
Let $(\Omega, \cF, \Pro)$ be a probability space, and let $\Exp$ denote expectation with respect to $\Pro$. Let  $\{\cG_n, n \geq 0\}$ be a filtration on this space, where $\cG_0$ is the trivial $\sigma$-algebra.  Let 
$\{Y_n, n \geq 1\}$  a  sequence of random variables in $\cL^2$  that is adapted to $\{\cG_n, n\geq 1\}$. 
Let $\omega \in \bN$, $\cN \subseteq \{w+1, w+2, \ldots\}$,  $  \mu_*,  \mu^*, \mu, v>0$, and  $q \in \bN$, and suppose that, for every $n>w$,
\begin{align}\label{mu_low}
\Exp[Y_n   \,|\,  \cG_{n-w-1} ] &\geq \begin{cases}
 \mu^*, \quad   n \notin \cN \\
 \mu_*,  \quad   n \in \cN,
\end{cases} \quad  \text{a.s.}
\end{align}
\begin{align}
 \Exp[Y_n \, |\, \cG_{n-1}] &\leq \mu \quad  \text{a.s.} \label{mu_up}\\
\Exp[Y^2_n \, | \, \cG_{n-1}] &\leq v \quad  \text{a.s.}
\label{variance}
\end{align}
and 
\begin{align} \label{sampling2}
\sum_{i=w+1}^{w+n}  \mathbb{I}\{i \in \cN\}\leq  \frac{n}{w} \,  q.
 \end{align}
For any $b>0$, set 
$$\tau_{b} \equiv \inf\{ n >  w: S_n \geq b\}, \quad \text{where} \quad S_n \equiv \sum_{i=w+1}^n Y_i.
$$
Then:
\begin{itemize}
\item[(i)]  $\Exp[\tau_b]<\infty$.
\item[(ii)]  There is a $C>0$, that does not depend on $b$, $q$, or $w$, so that
$$
\Exp[\tau_b ] \leq \frac{b+w\,  \mu + C (1+\sqrt{b}+\sqrt{w})}{  \mu^* (1-q/w)}.
$$
\item[(iii)] If $\mu^* \to \mu$  as $q,w \to \infty$ so that $q=o(w)$, then 
$$
\Exp[\tau_b] \leq \frac{b}{\mu}\, (1+o(1)),
$$
where $o(1)$ is a vanishing term as 
 $q, w, b \to \infty$ so that $q=o(w)$ and  $w=o(b)$.
\end{itemize}
\end{prop}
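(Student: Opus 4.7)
The plan is to prove (i), then (ii), with (iii) following immediately. For (i), I would sub-sample at $n_k := k(w+1)$, $k\geq 1$, noting $n_k - w - 1 = n_{k-1}$, so $\Exp[Y_{n_k}\,|\,\cG_{n_{k-1}}]\geq \mu^*\wedge\mu_*>0$. Combined with \eqref{variance}, a Kolmogorov-style bound on the sub-sampled partial sums gives polynomial (in fact geometric) decay of $\Pro(\tau_b > m(w+1))$ in $m$, hence $\Exp[\tau_b]<\infty$.

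For (ii), the main idea is to apply optional stopping to the $\{\cG_n\}$-martingale $M_n := S_n - \sum_{i=w+1}^n \mu_i$, where $\mu_i := \Exp[Y_i\,|\,\cG_{i-1}]$, and then lower-bound the drift via a telescoping identity that bridges the mismatch between $\cG_{i-1}$ (the natural filtration of $\mu_i$) and $\cG_{i-w-1}$ (the conditioning level in \eqref{mu_low}). Since $\tau_b\wedge n$ is bounded, $\Exp[M_{\tau_b\wedge n}]=0$ gives
$$\Exp[S_{\tau_b\wedge n}] \;=\; \sum_{i=w+1}^n \Exp\bigl[\mu_i\,\mathbb{I}\{\tau_b\geq i\}\bigr],$$
and the key move is to write $\mathbb{I}\{\tau_b\geq i\} = \mathbb{I}\{\tau_b\geq i-w\} - \mathbb{I}\{i-w\leq\tau_b<i\}$. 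The first indicator is $\cG_{i-w-1}$-measurable, so the tower property and \eqref{mu_low} give $\Exp[\mu_i\,\mathbb{I}\{\tau_b\geq i-w\}]\geq \mu^{(i)}\Pro(\tau_b\geq i-w)$ with $\mu^{(i)}\in\{\mu^*,\mu_*\}$ according as $i\notin\cN$ or $i\in\cN$; the second contributes total error $O(w\sqrt{v})$ via $|\mu_i|\leq\sqrt{v}$ (Jensen plus \eqref{variance}) and the fact that each $\{\tau_b = j\}$ appears in at most $w$ summands. Finally, \eqref{sampling2} lets me bound $\sum_{i\in\cN,\,i\leq n}\Pro(\tau_b\geq i-w)\leq (q/w)\Exp[\tau_b]$ by swapping the order of summation, so the lower bound becomes $\mu^*(1-q/w)\Exp[\tau_b\wedge(n-w)] - O(w\sqrt{v})$.

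For the matching upper bound, $S_{\tau_b\wedge n}\leq b+|Y_{\tau_b}|\,\mathbb{I}\{\tau_b\leq n\}$ (overshoot plus $S_n<b$ on $\{\tau_b>n\}$), and $\Exp[|Y_{\tau_b}|\,\mathbb{I}\{\tau_b\leq n\}]\leq\sqrt{v}$ by $\cG_{k-1}$-conditioning and Jensen on each $\{\tau_b=k\}$, giving $\Exp[S_{\tau_b\wedge n}]\leq b+\sqrt{v}$. Combining with the lower bound, sending $n\to\infty$ via monotone convergence (justified by (i)), and rearranging yields the bound in (ii); the $\sqrt{b}+\sqrt{w}$ contributions in the stated constant would come from a finer Cauchy--Schwarz overshoot estimate against the quadratic variation of $M_n$. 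Part (iii) is then immediate by inspection of (ii): under $q/w\to 0$, $w/b\to 0$, and $\mu^*/\mu\to 1$ the RHS reduces to $(b/\mu)(1+o(1))$.

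The main obstacle is precisely the conditioning mismatch in \eqref{mu_low}: the drift is guaranteed only relative to $\cG_{n-w-1}$, whereas the natural martingale structure lives in $\cG_{n-1}$. The decomposition of $\mathbb{I}\{\tau_b\geq i\}$ combined with the sampling-density bound \eqref{sampling2} is the device that resolves this, at a cost of an additive $O(w\sqrt{v})$ error and a multiplicative $(1-q/w)$ factor, both harmless in the asymptotic regime required for (iii).
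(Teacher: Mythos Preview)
Your drift lower bound via the indicator decomposition $\mathbb{I}\{\tau_b\geq i\}=\mathbb{I}\{\tau_b\geq i-w\}-\mathbb{I}\{i-w\leq\tau_b<i\}$ is essentially the same device the paper uses, phrased there as the ``shifted sum'' $\sum_{n=w+1}^{\tau_b+w}Y_n = S_{\tau_b}+\sum_{n=\tau_b+1}^{\tau_b+w}Y_n$; your correction term $\sum_i \mu_i\,\mathbb{I}\{i-w\leq\tau_b<i\}$ is exactly the tail block $\sum_{n=\tau_b+1}^{\tau_b+w}Y_n$ after conditioning on $\cG_{i-1}$. (Since that indicator is $\cG_{i-1}$-measurable, you may in fact use $\mu_i\leq\mu$ from \eqref{mu_up} instead of $|\mu_i|\leq\sqrt{v}$, recovering the paper's $w\mu$.) The treatment of $\cN$ via \eqref{sampling2} is also the same.

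There are, however, two genuine gaps. First, the overshoot bound $\Exp[|Y_{\tau_b}|\,\mathbb{I}\{\tau_b\leq n\}]\leq\sqrt{v}$ fails as stated: the event $\{\tau_b=k\}$ is \emph{not} $\cG_{k-1}$-measurable, since it depends on $Y_k$ through $\{S_k\geq b\}$, so ``$\cG_{k-1}$-conditioning on each $\{\tau_b=k\}$'' is illegitimate. What does work is to enlarge $\{\tau_b=k\}$ to the $\cG_{k-1}$-measurable event $\{\tau_b\geq k\}$, giving $\Exp[Y_{\tau_b}^2\,\mathbb{I}\{\tau_b\leq n\}]\leq v\,\Exp[\tau_b\wedge n]$ and hence $\Exp[R_b]\leq\sqrt{v\,\Exp[\tau_b]}$. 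This leads to a quadratic inequality in $\Exp[\tau_b]$ whose solution produces the $C(1+\sqrt{b}+\sqrt{w})$ term. So the ``finer Cauchy--Schwarz overshoot estimate'' you mention is not an optional refinement but the only route that works; the paper carries it out via Lorden's ladder-variable identity, obtaining $(\Exp[R_b])^2\leq\Exp\bigl[\sum_{n=w+1}^{\tau_b}Y_n^2\bigr]\leq v\,\Exp[\tau_b]$.

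Second, the sub-sampling argument for (i) does not connect to $\tau_b$: the sub-sampled partial sum $\sum_{k\leq m}Y_{n_k}$ is not $S_{n_m}$, and $\{\tau_b>n_m\}$ is an event about $S$, not about the sub-sampled process. The paper handles (i) by truncating $Y_n$ at a large constant $c$ (so the overshoot is trivially $\leq c$), applying the shifted-sum drift bound to the truncated sequence at the bounded time $\tilde{\tau}\wedge m$, and letting $m\to\infty$. Alternatively, you could run your own (ii) argument with $\tau_b\wedge n$ throughout---the $\cN$-density bound can be written with $\Exp[\tau_b\wedge n]$ on the right by the same interchange of sums---obtain a bound on $\Exp[\tau_b\wedge n]$ uniform in $n$, and deduce (i) by monotone convergence.
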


\begin{remark}
This proposition  generalizes    \cite[Proposition 2.1]{Lalley_Lorden_1986}, as  in the latter condition \eqref{mu_low} is assumed to hold for $w=1$  and condition  \eqref{sampling2} for $q=0$. 
\end{remark}

  \begin{proof}
(i)  For any  $c>0$ and $n >w$ we have 
\begin{align*}
\Exp\left[Y_n \cdot  \mathbb{I}\{Y_n \geq c\} \,  
\, | \, 
\cG_{n-w-1
} \right]
&\overset{(\text{a})} \leq \sqrt{\Exp[Y^2_n \,  
\, | \, \cG_{n-w-1}]} \; \sqrt{  \Pro(Y_n \geq c  
\, | \, \cG_{n-w-1}}) \\
&\overset{(\text{b})}\leq  \frac{\Exp\left[Y_n^2
\, | \, \cG_{n-w-1}
\right]}
{c} \overset{(\text{c})}\leq \frac{v}{c} \quad \text{a.s.},
\end{align*}
where (a) follows by the conditional Cauchy-Schwarz inequality, (b) by the  
 conditional Markov inequality, (c) by the law of iterated expectation and \eqref{variance}.
As a result,  for $c$ large enough we have 
\begin{equation}
\label{aux}
\Exp[Y_n \cdot \mathbb{I}\{Y_n \geq c\} \, | \, \cG_{n-w-1}]\leq (\mu_* \wedge  \mu^*)/2 \quad \text{a.s.},
\end{equation}
where we use $a\wedge b$ to denote $\min\{a,b\}$. Consequently, for $c$ large enough we have 
\begin{align}
\Exp[Y_n  \wedge c \, | \, \cG_{n-w-1}]  &\geq  \Exp[Y_n \cdot \mathbb{I}\{Y_n < c\} \, | \, \cG_{n-w-1} ] \nonumber\\
&= \Exp[Y_n \, | \, \cG_{n-w-1}]-  \Exp[Y_n \cdot \mathbb{I}\{Y_n \geq  c\} \, | \, \cG_{n-w-1} ] \nonumber\\
& \geq \mu_* \wedge  \mu^* - (\mu_* \wedge  \mu^*)/2= (\mu_* \wedge  \mu^*)/2 \quad \text{a.s.},
\label{eq:LBYnminc}
\end{align}
where the last inequality follows from \eqref{mu_low} and \eqref{aux}.
For every $n \in \bN$ with $n >w$  we set 
\[
\tilde{Y}_n \equiv Y_n \wedge c \qquad \text{and} \qquad  \tilde{S}_n\equiv \tilde{Y}_{w+1}+\ldots +\tilde{Y}_n,
\]
thus,  by \eqref{eq:LBYnminc}, we have 
\begin{equation}
\Exp[\tilde{Y}_n \, | \, \cG_{n-w-1}] \geq (\mu_* \wedge  \mu^*)/2 \quad \text{a.s.}
\label{eq:LBYnminc2}
\end{equation}
For every $b>0$ we set 
\begin{align*}
\tilde{\tau} \equiv \inf\{n >w: \tilde{S}_n \geq b\}. 
\end{align*}
Note that $\tau_b \leq \tilde{\tau}$ almost surely. 
Let $m \in \bN$ with $m >w$. Then:
 \begin{align}\label{aux0}
 \tilde{S}_{\tilde{\tau} \wedge m} \leq b+c.
 \end{align}
  Indeed, if   $\tilde{\tau}> m$, then $\tilde{S}_{\tilde{\tau} \wedge m}=\tilde{S}_m<b$, whereas  if $\tilde{\tau}\leq m$, then 
\begin{align*}
\tilde{S}_{\tilde{\tau} \wedge m}= \tilde{S}_{\tilde{\tau}}
\leq  b+\tilde{Y}_{\tilde{\tau}} \leq  b+c.
\end{align*}
Moreover,  
\begin{align}\label{aux05}
\tilde{S}_{\tilde{\tau}  \wedge m} &=\sum_{n=w+1}^{\tilde{\tau}\wedge m+w} \tilde{Y}_n- \sum_{n=\tilde{\tau}\wedge m+1}^{\tilde{\tau}\wedge m+w} \tilde{Y}_n.
\end{align}
We start by lower bounding the conditional expectation of the first term in the right-hand side in \eqref{aux05}. To this end, we first observe that 
\begin{align}\label{aux1}
\begin{split}
\Exp\left[\sum_{n=w+1}^{\tilde{\tau}\wedge m+w} \tilde{Y}_n  \right] &= \Exp\left[\sum_{n=w+1}^{\infty} \tilde{Y}_{n} \cdot \mathbb{I}\{\tilde{\tau} \wedge m  +w\geq n\}  \right]\\
&\stackrel{\text{(a)}}{=} \sum_{n=w+1}^{\infty} \Exp\left[ \tilde{Y}_{n} \cdot \mathbb{I}\{\tilde{\tau} \wedge m  +w\geq n\} \right]\\
&\stackrel{\text{(b)}}{=} \sum_{n=w+1}^{\infty} \Exp\left[ \Exp\left[ \tilde{Y}_{n}  \, |\, \, | \, \cG_{n-w-1} \right] \cdot \mathbb{I}\{\tilde{\tau} \wedge m+w \geq n\} \right].
\end{split}
\end{align}
The interchange of series and expectation in step (a) is justified because $\tilde{\tau} \wedge m$ is a bounded stopping time and the sequence $(\tilde{Y}_n)$ is  bounded above.  Step (b) follows by an application of the law of iterated expectation, in view of the fact that 
$\{\tilde{\tau} \wedge m +w \geq n\}$ is $\cF_{n-w-1}$-measurable for  $n>w$.

Then, by \eqref{aux1} and \eqref{eq:LBYnminc2} we conclude that 
\begin{align}\label{aux11}
\begin{split}
\Exp\left[\sum_{n=w+1}^{\tilde{\tau}\wedge m+w} \tilde{Y}_n  \right] \geq  \frac{\mu_* \wedge \mu^*} {2} \, \Exp\left[\sum_{n=w+1}^{\tilde{\tau} \wedge m +w} 1 \right]  =  \frac{\mu_* \wedge \mu^*} {2} \,  \Exp[\tilde{\tau}   \wedge m ].
\end{split}
\end{align}

We continue by upper bounding the conditional expectation of the second term in the right-hand side in \eqref{aux05}. Since $(\tilde{Y}_n)$ is a bounded  above sequence and $\tilde{\tau} \wedge m$ is a bounded stopping time that takes values larger than $w$, by Wald's identity we have 
\begin{align}\label{aux2}
\Exp\left[\sum_{n=\tilde{\tau} \wedge m+1}^{\tilde{\tau} \wedge m+w+1} \tilde{Y}_n  \right] &=\Exp\left[\sum_{n=\tilde{\tau} \wedge m+1}^{\tilde{\tau} \wedge m+w+1} \Exp[\tilde{Y}_n \, |\, \cG_{n-1}]   \right] \leq \mu  \, w,
\end{align}
where the inequality follows from \eqref{mu_up}.

Combining   \eqref{aux0}, \eqref{aux05}, \eqref{aux11}, and \eqref{aux2}, we conclude that for every $m \in \bN$  with $m >w$ we have 
$$
\Exp[\tilde{\tau}  \wedge m  ] \leq \frac{b+c+w \mu}{  (\mu_* \wedge \mu^*) /2}.
$$
Letting $m \to \infty$, by the Monotone Convergence Theorem we obtain 
\begin{equation} \label{bound}
\Exp[\tau_b  ] \leq \Exp[\tilde{\tau}     ] \leq \frac{b+c+w \mu}{ (\mu_* \wedge \mu^*) /2}<\infty.
\end{equation} 

(ii) We observe that 
\begin{align}\label{wald01}
\begin{split}
\Exp\left[\sum_{n=w+1}^{\tau_b+w} Y_n \right] & \stackrel{\text{(a)}}{=}
 \Exp\left[\sum_{n=w+1}^{\infty}  \Exp[Y_{n} \, |\, \cG_{n-w-1} ] \cdot \mathbb{I}\{\tau_b +w\geq n\}   \right]  \\
&\stackrel{\text{(b)}} \geq \Exp\left[\sum_{n=w+1}^{\infty}  \Exp[Y_{n} \, |\, \cG_{n-w-1} ] \cdot \mathbb{I}\{n \notin \cN\} \cdot \mathbb{I}\{\tau_b  +w\geq n\}    \right]  \\
&\stackrel{\text{(c)}}{\geq}  \mu^* \, \Exp\left[\sum_{n=w+1}^{\tau_b  +w}   \mathbb{I}\{n \notin \cN\} \right]  \\
& \stackrel{\text{(d)}}{\geq}  \mu^*  \,  (1-q/w) \, \Exp[\tau_b ].
\end{split}
\end{align}
Step (a) is obtained by following exactly the same steps as in \eqref{aux1},  with $\tilde{Y}_n$ replaced by $Y_n$ and $\tilde{\tau} \wedge m$ replaced by $\tau_b$, and it is justified because, by the result of part (i), $\tau_b$ is integrable, 
 and, by assumption \eqref{variance},$(\Exp[Y^2_n \, |\, \cG_{n-1}])$ is bounded. Step (b) and (c) follow by \eqref{mu_low}, and step (d) by \eqref{sampling2}.

 Moreover, 
 \begin{align}  \label{wald1}
\begin{split}
  \Exp\left[ \sum_{n=w+1}^{w+\tau_b} Y_n   \right] 
 &=  \Exp[ S_{\tau_b}  \, |\, \cF_{w} ] + \Exp\left[\sum_{n=\tau_b+1}^{w+\tau_b} Y_n  \right]  \\
 &\leq 
 b+  \Exp[ R_{b}  ] +  w \mu,
 \end{split}
 \end{align}
 where  $R_b \equiv S_{\tau_b}- b$ and the inequality follows in exactly the same way as 
  \eqref{aux2}, again with $\tilde{Y}_n$ replaced by $Y_n$ and $\tilde{\tau} \wedge m$ replaced by $\tau_b$. From \eqref{wald01} and \eqref{wald1} we have 
  \begin{align} \label{wald2}
  \Exp[\tau_b]
 &\leq  \frac{ b+ \Exp[R_b  ] +w \mu}{ \mu^* (1 -q/w)},
  \end{align}
and  it  is clear that it suffices to show that there is a $C>0$ so that 
\begin{align} \label{show}
\Exp[R_b ] &\leq C(1 +\sqrt{b} + \sqrt{w}).
\end{align} 
In order to do so, we first claim that 
\begin{align} \label{show00}
0\leq 2 \int_0^b R_c \, dc \leq  \sum_{n=w+1}^{\tau_b} Y_n^2- R_b^2,
\end{align} 
where  $R_c$ is defined as in $R_b$ with $b$  replaced by $c$. To prove this claim, we introduce the sequence of ladder variables
\begin{align*}
\rho_m &\equiv \inf\{n > \rho_{m-1}: S_{n}> S_{\rho_{m-1}}\}, \quad m \in \bN,\\
\rho_0 & \equiv \inf\{n > w: S_{n}> 0\}.
\end{align*}
These can be shown to be integrable, in the same way as we show  the integrability of $\tau_b$ in (i). Moreover, we set 
$$M_b \equiv \inf\{m \in \bN: S_{\rho_m} \geq b\},$$
thus, $\tau_b=\rho_{M_b}$ and 
$R_b=S_{\rho_{M_b}}-b$. Then, 
as in \cite[equation (1)]{Lorden_1970}, we have 
\begin{align} 
0\leq 2 \int_0^b R_c \, dc 
&=  \sum_{m=1}^{M_b} (S_{\rho_m}-S_{\rho_{m-1}})^2- R_b^2. \label{eq:ineqRc1}
\end{align}
By the definition of ladder times, we have  
\[
0< S_{\rho_m}-S_{\rho_{m-1}} \leq Y^+_{\rho_m}, \quad \forall \; m \in \bN.
\]
Therefore
\begin{align} 
  \sum_{m=1}^{M_b} (S_{\rho_m}-S_{\rho_{m-1}})^2
&\leq   \sum_{m=1}^{M_b} Y_{\rho_m}^2
\leq  \sum_{m=1}^{M_b} \sum_{n=\rho_{m-1}+1}^{\rho_m} Y_n^2  = \sum_{n=w+1}^{\tau_b} Y_n^2, \label{eq:ineqRc2}
\end{align}
where the equality holds because 
\[
\tau_b=\rho_{M_b}= \sum_{m=1}^{M_b} (\rho_m-\rho_{m-1}).
\]
Combining \eqref{eq:ineqRc1} and \eqref{eq:ineqRc2} proves \eqref{show00}.

Then, taking expectations  in \eqref{show00} 
and applying Jensen's inequality,
we have 
\begin{align} \label{wald21}
0\leq \Exp \left[ \sum_{n=w+1}^{\tau_b+w} Y_n^2   \right]- (\Exp[R_b   ])^2.
\end{align}
 
Moreover, by Wald's identity we have 
\begin{align} \label{wald22}
\begin{split}
 \Exp \left[ \sum_{n=w+1}^{\tau_b} Y_n^2    \right]  
 &= \Exp\left[ \sum_{n=w+1}^{\tau_b} 
\Exp\left[Y^2_n \, |\, \cG_{n-1}\right]  \right]\\
&\stackrel{\text{(a)}}{\leq} v \, \Exp[\tau_b] \\
 &\stackrel{\text{(b)}}\leq K \,  \left(b+ \Exp[R_b  ] +w \mu \right),
 \end{split}
\end{align}
where step (a) in \eqref{wald22} holds by  assumption \eqref{variance} of the theorem, step (b) by \eqref{wald2}, and 
\[
K\equiv  \frac{v}{ \mu^* (1 -q/w)}.
\]

Combining \eqref{wald21} and \eqref{wald22}, we have 
$$ (\Exp[R_b ])^2   - K (b+ \Exp[R_b   ] +w \mu)  \leq 0.$$

Solving this quadratic inequality, we conclude that
$$\Exp[R_b ] \leq \frac{1}{2} \left( K+\sqrt{K^2+4K(b+w\mu)}  \right) \leq K+\sqrt{K b} + \sqrt{K w\mu},
$$
where the second inequality follows by the subadditivity of the square-root. This implies  \eqref{show} with  $C=\max\{K, \sqrt{K}, \sqrt{K\mu} \}$ and completes the proof. \\

(iii) This follows directly by (ii).

  \end{proof}

\bibliographystyle{IEEEtran}
\bibliography{ref,biblio_new}


\end{document}